\DeclareMathOperator{\sll}{sl}
\DeclareMathOperator{\orr}{or}
\DeclareMathOperator*{\argmax}{arg\,max}
\DeclareMathOperator*{\argmin}{arg\,min}
\DeclareMathOperator{\sign}{sign}
\newtheorem{remark}{Remark} \newtheorem{theorem}{Theorem} {
  \theoremstyle{plain} \newtheorem{assumption}{} } {
  \theoremstyle{plain}  }
\newtheorem{lemma}{Lemma} 
\newtheorem{definition}{Definition}
\DeclareMathOperator{\err}{\mathcal{E}}
 \DeclareMathOperator{\var}{Var}
\newcommand{\E}{\mathbb{E}}
\newcommand{\indep}{\mbox{$\perp\!\!\!\perp$}} 
\newcommand{\Pn}{P_n}
\newcommand{\thetaSL}{\hat\theta_{\sll}}
\newcommand{\thetaOR}{\hat\theta_{\orr}}
\newcommand{\dSL}{\hat d_{\sll}}
\newcommand{\dOR}{\hat d_{\orr}}
\newcommand{\fSL}{\hat f_{\sll}}
\newcommand{\fOR}{\hat f_{\orr}}
\newcommand{\thetac}{\theta_{\mbox{\footnotesize c}}}
\newcommand{\betac}{V_{\mbox{\footnotesize c}}}
\newcommand{\Sc}{S_{\mbox{\footnotesize c}}}
\newcommand{\dopt}{d_0}
\newcommand{\doptz}{d_0}
\newcommand{\doptc}{d_{\mbox{c}}}
\newcommand{\thetacm}{\theta_{\mbox{\footnotesize c,m}}}
\newcommand{\one}{\mathds{1}} \newcommand{\R}{\mathbb{R}}
\newcolumntype{R}[2]{%
  >{\adjustbox{angle=#1,lap=\width-(#2)}\bgroup}%
  l%
  <{\egroup}%
}
\newcommand*\rot{\multicolumn{1}{R{90}{-1em}}}
\renewenvironment{proof}{{\it Proof }}{\qed \\} \title{Targeted
  Learning Ensembles for Optimal Individualized Treatment Rules with
  Time-to-Event Outcomes} \author[1]{Iv\'an D\'iaz\thanks{corresponding
    author: ild2005@med.cornell.edu}} \affil[1]{\small Division of
  Biostatistics, Weill Cornell Medicine.}  \author[1]{Oleksandr
  Savenkov}
\author[1]{Karla Ballman}
\begin{document}\maketitle

\begin{abstract}
  We consider estimation of an optimal individualized treatment rule
  from observational and randomized studies when a high-dimensional
  vector of baseline variables is available. Our optimality criterion
  is with respect to delaying expected time to occurrence of an event
  of interest (e.g., death or relapse of cancer). We leverage
  semiparametric efficiency theory to construct estimators with
  desirable properties such as double robustness. We propose two
  estimators of the optimal rule, which arise from considering two
  loss functions aimed at (i) directly estimating the conditional
  treatment effect (also know as the blip function), and (ii)
  recasting the problem as a weighted classification problem that uses
  the 0-1 loss function. Our estimated rules are \textit{super
    learning} ensembles that minimize the cross-validated risk of a
  linear combination in a user-supplied library of candidate
  estimators. We prove oracle inequalities bounding the finite sample
  excess risk of the estimator. The bounds depend on the excess risk
  of the oracle selector and a doubly robust term related to
  estimation of the nuisance parameters. We discuss some important
  implications of these oracle inequalities such as the convergence
  rates of the value of our estimator to that of the oracle
  selector. We illustrate our methods in the analysis of a phase III
  randomized study testing the efficacy of a new therapy for the
  treatment of breast cancer.
\end{abstract}


\section{Introduction}

Individualized treatment rules play a fundamental role in the
precision medicine model for healthcare, whereby medical decisions are
targeted to the individual based on their expected clinical response,
instead of the traditional one-size-fits-all approach. Mathematically,
a treatment rule is a function that maps an individual's pre-treatment
covariates into an optimal treatment choice. In this paper, we are
concerned with learning the optimal rules from data collected as part
of an observational or randomized study, where optimality is defined
as the maximum delay in the expected time of occurrence of an
undesirable event (e.g., death or relapse).

Recent advances in biomedical imaging and gene expression technology
produce large amounts of data that can be used to tailor treatment to
very specific patient characteristics.  Methods to estimate the
optimal rule when it is defined with respect to a single time-point
outcome include the work of \cite{qian2011performance,
  zhao2012estimating, song2015sparse, rubin2012statistical,
  mckeague2014estimation}, among others. Methods to solve the problem
using survival outcomes subject to informative censoring have been
proposed by \cite{zhao2011reinforcement, goldberg2012q}. The latter
methods use Q-learning, relying on sequential support vector
regressions, to estimate the optimal sequential treatment rule that
optimizes a survival outcome under
right-censoring. \cite{geng2015optimal} also tackle estimation of the
optimal rule in a survival setting using $\ell_1$ regularization for
the outcome regression under the strong assumption that censoring is
independent of covariates and the outcome, but their decision
functions are restricted to linear functions. \cite{zhao2015doubly}
generalize the weighted classification approach of
\cite{zhao2012estimating} to allow for informative censoring and
doubly robust loss functions, but their decision functions are
restricted to support vector machines. \cite{bai2016optimal} present
methods for estimating optimal rules with a survival outcome subject
to informative censoring. They consider two strategies based on
estimation of the blip function and based on a classification
perspective. Their methods are restricted to decision functions that
can be indexed by a Euclidean vector and parametric nuisance
estimators, and are therefore of limited applicability to
high-dimensional data. All the above methods are potential candidates
in the library of estimators that constitute our ensembles.


In this article, we propose two methods to construct an ensemble of
decision functions for the optimal rule. Our ensembles are linear
combinations of estimators in a user-supplied library, where the
coefficients in the linear combination are chosen to minimize the
cross-validated risk. We propose to use a doubly robust loss function
with roots in efficient estimation theory for marginal causal effects
\citep{Moore2009,diaz2015improved}. In our context, double robustness
means that the estimated rules will have certain optimality properties
under consistent estimation of at least one of two nuisance
parameters: (a) the hazard of the outcome at each time point
conditional on covariates and treatment, and (b) the hazard of
censoring and the treatment mechanism.

The library of candidate estimators may contain any of the algorithms
discussed in the previous paragraphs. In light of the \textit{no free
  lunch} theorems of \cite{Wolpert2002} for supervised learning, for
any given dataset, our ensembles are expected to have better or equal
generalization error than any of the individual candidates in the
library. We provide a formal proof of this claim in the form of an
oracle inequality, which bounds the excess risk of our estimator in
terms of the excess risk of the oracle estimator, defined as the
combination of estimators that would be chosen in a hypothetical world
in which an infinite validation sample is available and at least one
of the nuisance parameters is known. Our methods are developed under
the assumption that censoring is at random \citep{rubin1987multiple},
which means that censoring is random within strata of treatment and
baseline variables. We also assume that treatment is randomized within
strata of the covariates, either by nature or by experimentation.

The finite sample bounds we present are inspired by developments in
the targeted learning literature, which establish the optimality of
cross-validation in estimator selection for high-dimensional
parameters \citep{vanderLaan&Dudoit03}. Related to our work,
\cite{luedtke2016super} consider super learning ensembles for
estimation of optimal DTRs in two time points. They present oracle
inequalities for super learning of the optimal rule using a loss
function indexed by the treatment mechanism, which is assumed known.
We generalize their results in the following ways: (i) we provide
oracle inequalities under a doubly robust loss function indexed by two
nuisance parameters, when neither of the nuisance parameters is known,
(ii) we show that the oracle inequalities inherit the double
robustness property of the loss function, and (iii) we present
comparable oracle inequalities for the 0-1 loss function. In addition,
we discuss how these oracle inequalities are related to the
convergence of the value of the rule under a margin assumption
describing the behavior of the blip function in the boundary of the
decision threshold.


\section{Data and Notation}\label{sec:notation}
Assume individuals are monitored at $K$ time points
$t=\{1,\dots,K\}$. Let $T$ denote a time-to-event outcome taking
values in $\{1,\dots,K\}\cup \{\infty\}$, where $T=\infty$ represents
no event occurring in the follow-up period.  Let $C \in \{0,\dots,K\}$
denote the censoring time defined as the time at which the individual
is last observed in the study, and let $C=K$, represent administrative
censoring.  Let $A\in\{0,1\}$ denote study arm assignment, and let $W$
denote a vector of baseline variables, which may include gene
expression as well as demographic, comorbidity, and other clinical
data. Denote $\one(\cdot)$ the indicator variable taking value $1$ if
the argument is true and $0$ otherwise.  The observed data vector for
each participant is $O=(W,A,\Delta, \tilde T)$, where
$\tilde T=\min(C,T)$, and $\Delta = \one\{T\leq C\}$ is the indicator
that the participant's event time is observed (uncensored). For a
random variable $X$, we let $X$ take values on a set $\mathbf O$.

We assume the observed data vector for each participant $i$, denoted
$O_i=(W_i,A_i,\Delta_i, \tilde T_i)$, is an independent, identically
distributed draw from the unknown joint distribution $P_0$ on
$(W,A,\Delta, \tilde T)$. The empirical distribution of
$O_1,\ldots,O_n$ is denoted with $\Pn$. We assume $P_0\in \mathcal M$,
where $\mathcal M$ is the nonparametric model defined as all
continuous densities on $O$ with respect to a dominating measure
$\nu$. We use $P$ to denote a generic distribution
$P \in \mathcal{M}$, and $E_0(\cdot)$ to denote expectation with
respect to $P_0$, and $\E(\cdot)$ is used to denote expectation over
draws of $O_1,\ldots, O_n$. For a function $f(o)$, we denote
$Pf=\int f(o)dP(o)$, and $||f||^2=P_0f^2$. We use $a\lesssim b$ to
denote that $a$ is smaller or equal than $b$ up to a universal
constant.

We can equivalently encode a single participant's data vector $O$
using the following longitudinal data structure:
\begin{equation}
  O=(W, A, R_0, L_1, R_1, L_2\ldots, R_{K-1}, L_K),\label{O}
\end{equation}
where $R_t = \one\{\tilde T = t, \Delta=0\}$ and
$L_t= \one\{\tilde T = t, \Delta=1\}$, for $t\in\{0,\ldots,K\}$. 
For a random variable $X$, we denote
its history through time $t$ as $\bar X_t=(X_0,\ldots,X_t)$. For a
given scalar $x$, the expression $\bar X_t=x$ denotes element-wise
equality. 

Define the following indicator variables for each $t \geq 1$:
$I_t=\one\{\bar R_{t-1}=0, \bar L_{t-1}=0\}$,
$J_t=\one\{\bar R_{t-1}=0, \bar L_t=0\}.$ The variable $I_t$ is the
indicator based on the data through time $t-1$ that a participant is
at risk of the event being observed at time $t$. 
Analogously, $J_t$
is the indicator based on the outcome data through time $t$ and
censoring data before time $t$ that a participant is at risk of
censoring at time $t$. We define $J_0=1$.

Define the discrete hazard function for survival at time $m \in \{1, \dots, K \}$:
\begin{equation}
  h(m,a,w)=P_0(L_m=1\mid I_m = 1, A=a, W=w),\nonumber
\end{equation}
among the population at risk at
time $m$ within strata of study arm and baseline variables. Similarly, for the censoring variable $C$, define the censoring hazard at time $m \in \{0, \dots, K \}$:
\begin{equation}
  g_R(m,a,w)=P_0(R_m=1\mid J_m=1, A=a, W=w).\nonumber
\end{equation}
We use the notation $g_A(a,w)=P_0(A=a\mid W=w)$, $g=(g_A,g_R)$, and $\eta=(h,g_A,g_R)$.
Let $p_{W}$ denote the marginal distribution of the baseline variables $W$.
We add the subscript $0$ to $p_W,g,h$ to denote the corresponding quantities under  $P_0$.

\section{Treatment Effect, Identification, and Optimal Individualized
  Treatment Rules}\label{sec:causal}

\subsection{Potential Outcomes and Causal Parameter}
Define the potential outcomes $T_a:a\in\{0,1\}$ as the event times
that would have been observed had study arm assignment $A=a$ and
censoring time $C=K$ been externally set with probability one. For a
restriction time $\tau\in\{1,\ldots,K\}$ of interest, we define the
restricted survival time under treatment arm $A=a$ as
$\min(T_a,\tau)$. For a transformation $Z$ of $W$, the treatment
effect within strata of the covariates $Z$ may be defined in terms of
the so-called full-data \textit{blip} function \citep[see e.g.,][]{robins1997causal} of the restricted
mean survival time:
\[\thetac(z) = E\{\min(T_1,\tau)-\min(T_0,\tau)\mid Z = z\}.\]
The transformation $Z$ may represent a subset of covariates (e.g.,
gene expression), or the whole vector $W$. We define the marginal
treatment effect as $\thetacm=E\{\min(T_1,\tau)-\min(T_0,\tau)\}$.

The subscript $c$ denotes a causal parameter, that is, a parameter
of the distribution of the potential outcomes $T_1$ and $T_0$. It can
be shown \citep[see][]{diaz2015improved} that $E\{\min(T_a,\tau)\mid Z=z\}=\sum_{t=0}^{\tau -
  1}\Sc(t,a,z),$ where $\Sc(t,a,z)=P(T_a>t\mid Z=z)$ is the survival
probability corresponding to the potential outcome under assignment to
arm $A=a$ within strata $Z=z$. As a result, $\thetac(z)$ may be
expressed as
\begin{equation}
  \thetac(z) = \sum_{t=1}^{\tau-1}\{\Sc(t,1,z) - \Sc(t,0,z)\}, \label{theta_c_def}
\end{equation}
since $\Sc(0,a,z)=1$ for $a\in \{0,1\}$ and for all $z$.

An individualized treatment rule $d$ is a function that maps the
covariate values $z$ of a given participant to a personalized
treatment decision in $\{0,1\}$. The potential time to event under a
rule $d$ is defined as $T_{d} = d(z)T_1 +
\{1-d(z)\}T_0$. Accordingly, the restricted mean survival time under a treatment rule that
assigns treatment according to $d(z)$ is equal to
\[E\{\min(T_{d},\tau)\}=E\{d(Z)[\min(T_1,\tau) - \min(T_0,\tau)]\}
  + E\{\min(T_0,\tau)\}.\] Because the last term does not depend on
$d(z)$, we define the \textit{value} of the rule $d$ as
\[\betac(d)=E\{d(Z)[\min(T_1,\tau) - \min(T_0,\tau)]\}=E\{d(Z)\thetac(Z)\}.\]
The above equation provides the basis for the definition of an
optimal rule as
\[\doptc(z) = \argmax_{d\in\mathcal D}\betac(d)=\one\{\thetac(z) > 0\},\]
where $\mathcal D=\{d:\mathbf Z \to \{0,1\}\}$ is the space of
functions that map the range of $Z$ into a treatment decision in
$\{0,1\}$. We define optimality of an rule with respect to the
restricted mean survival time, though other effect measures could also
be used.

\subsection{Identification of Parameters in Terms of Observed Data
  Generating Distribution $P_0$} \label{sec:identification}

In this section we show how the blip function $\thetac(z)$, the value
function $\betac(d)$, and the optimal rule $\doptc(z)$, which are
defined above in terms of the distribution of potential outcomes, can
be equivalently expressed as functions $\theta_0(z)$, $V_0(d)$, and
$\dopt(z)$ of the observed data distribution
$P_0(W,A,\Delta, \tilde T)$, under the assumptions \ref{ass:1}-\ref{ass:4} below.
This is useful since the potential outcomes are not always observed,
in contrast to the observed data vector $(W,A,\Delta, \tilde T)$ for
each participant, whose distribution we can make direct statistical
inferences about.

Define the following assumptions:
\begin{assumption}[Consistency]\label{ass:1}
  $T= \one(A=0) T_0 + \one(A=1) T_1$
\end{assumption}
\begin{assumption}[Randomization]\label{ass:2} $A$ is independent of $T_a$ conditional on $W$, for each
  $a\in\{0,1\}$
\end{assumption}
\begin{assumption}[Random censoring]\label{ass:3}
  $C$ is independent of $T_a$ conditional on $(A,W)$, for each
  $a\in\{0,1\}$
\end{assumption}
\begin{assumption}[Strong positivity]\label{ass:4}
  $P_0(g_{A,0}(a,W) > \epsilon)=1$ and $P_0(g_{R,0}(t,a,W) < 1-\epsilon)=1$ for each
  $a\in\{0,1\}$ and $t\in\{0,\ldots,\tau-1 \}$ and some $\epsilon>0$.
\end{assumption}
We make assumptions \ref{ass:1}-\ref{ass:4} throughout the manuscript. 
Denote the survival and censoring function for $T$ at time
$t \in \{1,\dots, \tau-1\}$ conditioned on study arm $a$ and baseline
variables $w$ by
\begin{equation*}
  S(t,a,w)=P(T>t\mid A=a,W=w), \quad G(t,a,w)=P(C\geq t\mid A=a,W=w).
\end{equation*}
Under assumptions \ref{ass:1}-\ref{ass:4}, we have $T \indep C \mid A,W$ and therefore
$S(t,a,w)$ and $G(t,a,w)$  have the following product formula representations:
\begin{align}
  S(t,a,w)&=\prod_{m=1}^t \{1-h(m,a,w)\}, \quad
            G(t,a,w)=\prod_{m=0}^{t-1} \{1-g_R(m,a,w)\}.
            \label{defS}
\end{align}
The potential outcome survival function $\Sc(t,a,z)$ can be equivalently
represented in terms of the observed data distribution as
$S(t,a,z)=E\{S(t,a,W)\mid Z=z\}$.
It follows from (\ref{theta_c_def}) that the causal parameter
$\thetac(z)$ is equal to the following observed-data blip function:
\begin{equation}
  \theta(z) =
  \sum_{t=1}^{\tau-1} E\left\{\prod_{m=1}^t \{1-h(m,1,W)\} -
    \prod_{m=1}^t \{1-h(m,0,W)\}\mid Z=z \right\}.
  \label{deftheta}
\end{equation}
Thus, the value $\betac(d)$ of a rule $d$ is equal to
$V(d)=E\{d(Z)\theta(Z)\}$, and a corresponding optimal treatment
rule is equal to $\dopt(z) = \one\{\theta_0(z)> 0\}$, where we denote
the corresponding true quantities (i.e., quantities computed
w.r.t. $P_0$) as $\theta_0(z)$, $V_0(d)$, and $\doptz(z)$.

In addition to assumptions \ref{ass:1}-\ref{ass:4} above, we sometimes
make the following margin assumption, which is common in the classification
literature for plug-in estimators:
\begin{assumption}[Margin assumption]\label{ass:ma}
  There exists a constant $\lambda\geq 0$ such that $P_0(0<
  \theta_0(Z)\leq t)\lesssim t^\lambda$ for all $t>0$.
\end{assumption}
The case $\lambda=0$ is trivial and implies no assumption, whereas
$\lambda=\infty$ corresponds to the strongest assumption since it
implies that $\theta_0(Z)$ is bounded away from zero. This assumption
characterizes the behavior of the decision function in the boundary,
and has been shown crucial to establish the convergence of certain
classifiers \cite[e.g.,][]{audibert2007fast,luedtke2017faster}.
\section{Plug-in Estimation of the Blip Function and the Optimal
  Rule}\label{sec:estimate}

In this section we discuss various estimators for $\theta_0(z)$, which
can be mapped to a plug-in estimators through
$\doptz(z) = \one\{\theta_0(z)> 0\}$. Our general strategy relies on
the concept of \textit{censoring unbiased transformation}, given in
Definition~\ref{def:cut} below. This concept was first introduced by
\cite{fan1994censored} and is further discussed in
\cite{rubin2007doubly}, among others.

\begin{definition}[Unbiased transformation]\label{def:cut}
  $D:\mathcal O\to \R$ is referred to as an unbiased transformation
  for $\theta_0(z)$ if
  $E_0\left\{D(O)\mid Z=z\right\} = \theta_0(z)$.
\end{definition}
The above definition motivates the construction of estimators of
$\theta_0(z)$ by regressing the transformation $D(O)$ on the
covariates $Z$. A common complication in this step is that most
unbiased transformations typically depend on unknown \textit{nuisance}
parameters which must be estimated prior to carrying out the
analysis. 
In this work, we focus on the \textit{doubly robust} censoring
unbiased transformation $D_\eta$ defined in Lemma~\ref{lemma:eif}
below. In addition to being a doubly robust unbiased transformation
for $\theta_0(z)$ (i.e., providing robustness to inconsistent
estimation of one out of two nuisance parameters), $D_\eta$ is an
efficient estimating function in the non-parametric model in the sense
that it may be used to construct efficient estimators of the marginal
treatment effect $\thetacm$ \cite[see e.g.,][]{diaz2015improved}.

\begin{lemma}[Doubly robust censoring unbiased
  transformation]\label{lemma:eif}
  Define
  \begin{equation} D_{\eta}(O)=
    \sum_{m=1}^{\tau-1}\big[I_mZ(m,A,W)\left\{L_m - h(m,A,W)\right\} +
    S(m,1,W) - S(m,0,W)\big],\label{defD}
  \end{equation}
  where $Z(m,A,W)=Z_1(m,A,W)-Z_0(m,A,W)$, and
  \begin{equation}
    Z_a(m,A,W)=-\sum_{t=m}^{\tau-1}\frac{\one\{A=a\}}{g_A(a,W)G(m,a,W)}
    \frac{S(t,a,W)}{S(m,a,W)}.\label{defZ}
  \end{equation}
  Assume $\eta=(h,g_A,g_R)$ is such that $h=h_0$ or
  $(g_A,g_R)=(g_{A,0}, g_{R,0})$. Then $D_\eta$ is an unbiased
  transformation for $\theta_0(z)$, that is,
  $E_0\left\{D_\eta(O)\mid Z=z\right\} = \theta_0(z)$ .
\end{lemma}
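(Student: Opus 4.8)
The plan is to exploit the additive arm-wise structure of $D_\eta$ and reduce the claim to a statement about the discrete survival probabilities $S_0(t,a,z)=E_0\{S_0(t,a,W)\mid Z=z\}$, recalling that $\theta_0(z)=\sum_{t=1}^{\tau-1}\{S_0(t,1,z)-S_0(t,0,z)\}$. First I would write $D_\eta=\Psi_1-\Psi_0$, where $\Psi_a$ collects the contribution of arm $a$,
\[
\Psi_a=\sum_{m=1}^{\tau-1}\big[I_m Z_a(m,A,W)\{L_m-h(m,A,W)\}+S(m,a,W)\big].
\]
Substituting the definition of $Z_a$ and interchanging the order of summation, $\sum_{m=1}^{\tau-1}\sum_{t=m}^{\tau-1}=\sum_{t=1}^{\tau-1}\sum_{m=1}^{t}$, this becomes $\Psi_a=\sum_{t=1}^{\tau-1}\phi_{t}^{a}(O)$ with
\[
\phi_{t}^{a}(O)=S(t,a,W)-\sum_{m=1}^{t} I_m\,\frac{\one\{A=a\}}{g_A(a,W)\,G(m,a,W)}\,\frac{S(t,a,W)}{S(m,a,W)}\,\{L_m-h(m,A,W)\}.
\]
Since $S_0(t,a,z)=E_0\{S_0(t,a,W)\mid Z=z\}$, it suffices to show $E_0\{\phi_t^a(O)\mid Z=z\}=S_0(t,a,z)$ under each nuisance-consistency condition; summing over $t$ and differencing the arms then gives $E_0\{D_\eta\mid Z=z\}=\theta_0(z)$. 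As $\phi_t^a$ is the augmented inverse-probability-weighted estimating function for a single discrete survival probability, I would in fact prove the stronger fact $E_0\{\phi_t^a(O)\mid W=w\}=S_0(t,a,w)$ and then take the conditional expectation given $Z=z$.

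For the case $h=h_0$ I would argue that the augmentation has conditional mean zero by an iterated-expectation (martingale) argument. Conditioning on the history $\mathcal F_{m-1}=\sigma(W,A,\bar R_{m-1},\bar L_{m-1})$, the factors $I_m$, $S(t,a,W)/S(m,a,W)$, $g_A(a,W)$, and $G(m,a,W)$ are all $\mathcal F_{m-1}$-measurable, while by the definition of the hazard $E_0\{L_m-h_0(m,A,W)\mid \mathcal F_{m-1},\,I_m=1\}=0$, and the multiplier $I_m$ annihilates the event $\{I_m=0\}$. Hence every summand in the augmentation has conditional mean zero, so $E_0\{\phi_t^a\mid W=w\}=S(t,a,w)$; since $h=h_0$ forces $S=S_0$ through the product formula \eqref{defS}, this equals $S_0(t,a,w)$, as required. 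This step uses no correctness of $g$.

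For the case $(g_A,g_R)=(g_{A,0},g_{R,0})$ with $h$ arbitrary, the augmentation must instead \emph{correct} the bias of the plug-in term, and this is the crux. I would compute $E_0\{\phi_t^a\mid W=w\}$ directly: conditioning on $\{A=a,W=w\}$ cancels the factor $\one\{A=a\}/g_{A,0}(a,w)$, after which the key sub-computation is
\[
E_0\!\left\{\frac{I_m}{G_0(m,a,w)}\{L_m-h(m,a,w)\}\,\Big|\,A=a,\,W=w\right\}=S_0(m-1,a,w)\{h_0(m,a,w)-h(m,a,w)\},
\]
which uses the at-risk probability $P_0(I_m=1\mid A=a,w)=S_0(m-1,a,w)\,G_0(m,a,w)$ --- so that the inverse censoring weight $G_0$ cancels exactly --- together with $E_0\{L_m\mid I_m=1,A=a,w\}=h_0(m,a,w)$. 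This at-risk identity relies on the within-period convention that the event precedes censoring, encoded in the definitions of $I_m$ and $J_m$, and on $T\indep C\mid A,W$. Substituting reduces the claim to the deterministic identity
\[
\frac{S_0(t,a,w)}{S(t,a,w)}=1-\sum_{m=1}^{t}\frac{S_0(m-1,a,w)}{S(m,a,w)}\{h_0(m,a,w)-h(m,a,w)\},
\]
which I would verify by telescoping the first differences $\tfrac{S_0(m,a,w)}{S(m,a,w)}-\tfrac{S_0(m-1,a,w)}{S(m-1,a,w)}=-\tfrac{S_0(m-1,a,w)}{S(m,a,w)}\{h_0(m,a,w)-h(m,a,w)\}$, obtained from $S(m,a,w)=S(m-1,a,w)\{1-h(m,a,w)\}$ and the analogous relation for $S_0$.

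The main obstacle is this second case: one must recognize the martingale-residual structure of $L_m-h(m,A,W)$ on the at-risk set, verify that the at-risk probability exactly cancels the inverse-probability-of-censoring weight $G_0$ (which is where the censoring-at-random assumption~\ref{ass:3} and the positivity assumption~\ref{ass:4} enter), and then spot the telescoping identity that collapses the correction term to $S_0(t,a,w)-S(t,a,w)$. The case $h=h_0$ is comparatively routine. Summing the two displays over $t\in\{1,\dots,\tau-1\}$, taking $E_0\{\cdot\mid Z=z\}$ via $S_0(t,a,z)=E_0\{S_0(t,a,W)\mid Z=z\}$, and differencing arms $a=1,0$ then completes the proof in both regimes.
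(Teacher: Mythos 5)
Your proof is correct, and it overlaps with the paper's own proof only in its first step: the arm-wise decomposition and the swap of summation order producing your $\phi_t^a$ is exactly the paper's treatment-time-specific function $D_{t,a,\eta}$ (same object, with the roles of the index names exchanged). From there the two arguments diverge. The paper runs a single algebraic computation valid for arbitrary $\eta$: it expands $E_0(D_{m,a,\eta}\mid Z)-E_0\{\prod_{t\le m}(1-h_0(t))\mid Z\}$ into a sum of terms, each carrying the product of an $h$-error $\{h_0(t)-h(t)\}$ and a $g$-error factor (itself telescoped into $g_{R,0}-g_R$ and $g_{A,0}-g_A$ pieces via Lemma~\ref{lemma:telescope}), so that double robustness follows in one stroke because either hypothesis annihilates every summand. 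You instead treat the two cases separately: a martingale/tower-property argument when $h=h_0$ (the residual $I_m\{L_m-h_0(m,A,W)\}$ is conditionally centered given $\mathcal F_{m-1}$, for any $g$), and, when $g=g_0$, a direct computation whose crux --- the cancellation of $G_0(m)$ against the at-risk probability $S_0(m-1)G_0(m)$, followed by the telescoping identity for the ratio $S_0/S$ --- is valid and is in fact equivalent to the paper's Lemma~\ref{lemma:telescope} applied to the hazards $h_0$ and $h$. What each approach buys: your case split is more probabilistically transparent, and your $h=h_0$ half is cleaner than the corresponding algebra; the paper's unified expansion, on the other hand, yields an explicit product-form expression for the conditional bias that is not merely used to prove the lemma but is recycled quantitatively in the proofs of Theorems~\ref{theo:oracle} and~\ref{theo:oraclezo}, where it is what allows the term $Q_1$ to be bounded by $B_1(\hat\eta,\eta_0)=\E||(\hat g-g_0)(\hat h-h_0)||$ via Cauchy--Schwarz. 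So your argument fully establishes the lemma as stated, but the paper's version of the computation carries extra information needed downstream.
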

As a consequence of the previous lemma, the expected value of the
quadratic loss function $L_\eta(O;\theta) = \{D_\eta(O) - \theta(Z)\}^2$
is minimized at $\theta_0$ if $\eta=(h,g)$ is such that either $h=h_0$
or $g=g_0$.

For a loss function $L_\eta$, we denote its expected value as
$R_{0,\eta}(\theta)=E_0\{L_\eta(O;\theta)\}$ and refer to it as the
risk. We now discuss the construction of super learning ensembles of
candidate estimators for $\theta$ that target minimization of the
quadratic risk. Consider a collection of estimation algorithms for
estimating $\theta_0$, hereby called a \textit{library},
${\cal L}=\{\hat\theta_j:j=1,\ldots,J\}$. For an estimator $\hat\eta$
of $\eta_0$, in light of the discussion of the previous section, this
library may be constructed by considering any predictive algorithm
that minimizes the quadratic risk for prediction of the doubly robust
unbiased transformation $D_{\hat\eta}(O)$. The literature in machine
and statistical learning provides us with a wealth of algorithms that
may be used in this step. Examples include algorithms based on
regression trees (e.g., random forests, Bayesian regression trees),
algorithms based on smoothing (e.g., generalized additive models,
local polynomial regression, multivariate adaptive regression
splines), and others (e.g., support vector machines, neural networks).

Consider the following cross-validation set up.  Let
${\cal V}_1,\ldots,{\cal V}_K$ denote a random partition of the index
set $\{1,\ldots,n\}$ into $K$ validation sets of approximately the
same size. That is, ${\cal V}_k\subset \{1,\ldots,n\}$;
$\bigcup_{k=1}^K {\cal V}_k = \{1,\ldots,n\}$; and
${\cal V}_k\cap {\cal V}_{k'}=\emptyset$. In addition, for each $k$,
the associated training sample is
${\cal T}_k=\{1,\ldots,n\}\setminus {\cal V}_k$. Denote $\hat\eta_k$
the estimator of $\eta_0$ trained only using data in $\mathcal
T_k$. Likewise, denote by $\hat\theta_{j,k}$ the estimator of
$\theta_0$ obtained by training the $j$-th predictive algorithm in
$\mathcal L$ using only data in the sample ${\cal T}_k$ (e.g.,
regressing $D_{\hat\eta_k}(O_i)$ on $V_i$ for $i\in {\cal T}_k$). We
use $k(i)$ to denote the index of the validation set that contains
observation $i$.  The cross-validated prediction risk of
$\hat\theta_j$ is defined as
\[\hat R_{\hat\eta}(\hat\theta_j) =
  \frac{1}{K}\sum_{i=1}^n\frac{1}{|\mathcal V_{k(i)}|}
  L_{\hat\eta_{k(i)}}(O_i, \hat\theta_{j,k(i)}).\]
In this paper we
consider an ensemble learner given by a convex combination
\[\hat\theta_\alpha(z) = \sum_{j = 1}^{J}\alpha_j \hat\theta_j(z),\quad
  \alpha_j\geq 0,\quad \sum_{j=1}^{J}\alpha_j=1.\]
The weights $\alpha_j$ are chosen to minimize the
cross-validated risk of the above combination, that is:
\[\hat\alpha =
  \argmin_{\alpha} \sum_{i=1}^n\frac{1}{|\mathcal
    V_{k(i)}|}\left\{D_{\hat\eta_{k(i)}}(O_i) - \sum_{j = 1}^{J}
    \alpha_j \hat\theta_{j,k(i)}(V_i)\right\}^2\text{ subject to }
  \alpha_j\geq 0,\quad \sum_{j=1}^{J}\alpha_j=1.\] The above
expression is a weighted ordinary least squares problem with
constraints on the coefficients, and may therefore be solved using
standard off-the-shelf regression or optimization software. We denote
this super learner with $\thetaSL=\hat\theta_{\hat\alpha}$.

The optimality of general cross-validation selection procedures is
discussed in \cite{vanderLaan&Dudoit&vanderVaart06b,
  vanderVaart&Dudoit&vanderLaan06}. Optimality here is defined in
terms of asymptotic equivalence with the \textit{oracle} risk, which
we define as the risk computed when (i) one of the components of the
nuisance parameter $\eta_0$ is known, and (ii) a validation sample of
infinite size is available to assess the performance of the
estimator. Specifically,
\begin{definition}[Oracle risk and oracle selector] Let $\eta_1=(g_1, h_1)$, where either
  $g_1=g_0$, or $h_1=h_0$. The oracle risk of a candidate
  $\hat\theta_\alpha$ is defined as
  \[\tilde R_{\eta_1}(\hat\theta_\alpha)=\frac{1}{K}\sum_{k=1}^K\int
    \left\{D_{\eta_1}(o) - \hat\theta_{\alpha,k}(z)\right\}^2dP_0(o).\]
  The oracle selector is equal to
  \[\tilde\alpha= \argmin_\alpha \tilde R_{\eta_1}(\hat\theta_\alpha)\,\,\text{ subject
      to } \alpha_j\geq 0,\quad \sum_{j=1}^{J}\alpha_j=1,\]
  and the corresponding oracle blip function is denoted with $\thetaOR=\hat\theta_{\tilde\alpha}$.
\end{definition}

The risk $\tilde R_{\eta_1}(\theta_0)=\int L_{\eta_1}(o;\theta_0)dP_0(o)$ is
the optimal risk (with respect to the loss function $L_{\eta_1}$,
which in light of Lemma~\ref{lemma:eif} is a valid loss function)
achieved by the true $\theta_0$.  The following theorem provides a
bound on the excess risk of the estimator $\thetaSL$ and the excess
risk of $\thetaOR$. The excess risk for a selector $\hat\alpha$ is
defined as the difference between the oracle risk of the selector
$\hat\alpha$ and the optimal risk, i.e.,
\[\err^2(\hat \theta_{\hat\alpha}) = \E\{\tilde R_{\eta_1}(\hat \theta_{\hat\alpha})- \tilde
  R_{\eta_1}(\theta_0)\}=\E\,P_0(\hat\theta_\alpha -\theta_0)^2,\]
where we remind the reader that the expectation is taken over draws of
$O_1,\ldots, O_n$.  We denote this excess risk as
$\err^2(\hat \theta_{\hat\alpha})$, below we refer to its square root
as $\err(\hat \theta_{\hat\alpha})$. We show that the above excess
risk is bounded by two terms: one depending on the excess risk of the
oracle selector $\err(\thetaOR)$, and another one depending on doubly
robust terms associated to estimation of $\eta_0$.

\begin{theorem}[Oracle inequality for the super learner of the blip
  function]\label{theo:oracle}
  Let $\eta_1=(g_1,h_1)$ denote the element-wise $L_2(P_0)$ limit of
  $\hat\eta$ as $n\to\infty$, and assume that either $g_1=g_0$ or
  $h_1=h_0$. Define
  \begin{align*}
    B_1(\hat\eta, \eta_0)&=\E||(\hat g- g_0)(\hat h - h_0)||\\
    B_2(\hat\eta, \eta_0)&=\E\left\{\one(g_1=g_0)||\hat g- g_0|| + \one(h_1=h_0)||\hat h - h_0||\right\}^2.
  \end{align*}
  Then,
  for $\delta > 0$
  \begin{multline}\err(\thetaSL)\leq
    (1+2\delta)^{1/2}\err(\thetaOR) +
    C_1\left\{(1+\log n)/n\right\}^{1/2} + \\ C_2B_1(\hat\eta,\eta_0)+ C_3\left(\log
      n/n\right)^{1/4}\left\{B_2(\hat\eta,\eta_0)\right\}^{1/2}\label{eq:or}
  \end{multline}
  for constants $C_1$, $C_2$, and $C_3$.
\end{theorem}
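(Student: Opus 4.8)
The plan is to exploit the fact that $\hat\alpha$ minimizes the empirical cross-validated risk $\hat R_{\hat\eta}$ and to transfer this optimality to the oracle risk $\tilde R_{\eta_1}$, paying for the transfer with a nuisance-bias term and an empirical-process term. Throughout I would work on the squared (excess-risk) scale and center every risk at $\theta_0$. Writing $\widehat{\mathrm{ex}}(\alpha)=\hat R_{\hat\eta}(\hat\theta_\alpha)-\hat R_{\hat\eta}(\theta_0)$ and $\widetilde{\mathrm{ex}}(\alpha)=\tilde R_{\eta_1}(\hat\theta_\alpha)-\tilde R_{\eta_1}(\theta_0)$, the unbiasedness of $D_{\eta_1}$ (Lemma \ref{lemma:eif}, valid because $\eta_1$ has at least one correct component) gives $\E\,\widetilde{\mathrm{ex}}(\alpha)=\err^2(\hat\theta_\alpha)$. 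Since $\widehat{\mathrm{ex}}(\hat\alpha)\le\widehat{\mathrm{ex}}(\tilde\alpha)$ pathwise, adding and subtracting yields the master inequality
\[
\widetilde{\mathrm{ex}}(\hat\alpha)\le\widetilde{\mathrm{ex}}(\tilde\alpha)+\Delta(\hat\alpha)-\Delta(\tilde\alpha),\qquad \Delta(\alpha):=\widehat{\mathrm{ex}}(\alpha)-\widetilde{\mathrm{ex}}(\alpha),
\]
so that after taking expectations, and using $\E\widetilde{\mathrm{ex}}(\tilde\alpha)=\err^2(\thetaOR)$, the whole problem reduces to bounding the discrepancy $\Delta$ at the two selectors $\hat\alpha,\tilde\alpha$. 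I would split this as $\Delta(\alpha)=[\widehat{\mathrm{ex}}(\alpha)-\widetilde{\mathrm{ex}}_{\hat\eta}(\alpha)]+[\widetilde{\mathrm{ex}}_{\hat\eta}(\alpha)-\widetilde{\mathrm{ex}}(\alpha)]$, where $\widetilde{\mathrm{ex}}_{\hat\eta}$ is the population excess risk computed with the estimated loss $L_{\hat\eta}$: the first bracket is a centered empirical process, the second a pure nuisance bias.

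For the nuisance-bias bracket, a direct expansion of the quadratic loss gives $\widetilde{\mathrm{ex}}_{\hat\eta}(\alpha)-\widetilde{\mathrm{ex}}(\alpha)=-\tfrac2K\sum_k P_0\{b_{\hat\eta_k}(\hat\theta_{\alpha,k}-\theta_0)\}$, where $b_{\hat\eta}(z)=E_0\{D_{\hat\eta}(O)\mid Z=z\}-\theta_0(z)$ is the conditional bias of the transformation and $b_{\eta_1}=0$ by unbiasedness. Here I would invoke the doubly robust algebra behind Lemma \ref{lemma:eif} — the same computation showing $D_\eta$ is an efficient estimating function — to establish that this conditional bias is a genuine second-order product, $\|b_{\hat\eta}\|\lesssim\|(\hat g-g_0)(\hat h-h_0)\|$. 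Cauchy--Schwarz then bounds the bracket by $\lesssim\tfrac1K\sum_k\|(\hat g_k-g_0)(\hat h_k-h_0)\|\,\|\hat\theta_{\alpha,k}-\theta_0\|$, which after taking expectations is of order $B_1(\hat\eta,\eta_0)\,\err(\hat\theta_\alpha)$; it is this cross term that ultimately deposits $B_1$ linearly in the final bound.

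For the empirical-process bracket I would further split the estimated loss around the oracle loss, $[\widehat{\mathrm{ex}}(\alpha)-\widetilde{\mathrm{ex}}_{\hat\eta}(\alpha)]=(\mathrm{A}_1)+(\mathrm{A}_2)$, where $(\mathrm{A}_1)$ is the centered empirical process of $L_{\eta_1}(\hat\theta_{\alpha,k})-L_{\eta_1}(\theta_0)$ and $(\mathrm{A}_2)$ is the centered empirical process of the difference of the two losses, which simplifies to $-2(\hat\theta_{\alpha,k}-\theta_0)(D_{\hat\eta_k}-D_{\eta_1})$. Term $(\mathrm{A}_1)$ is exactly the object controlled by the cross-validation oracle inequality of \cite{vanderLaan&Dudoit&vanderVaart06b}: conditionally on each training fold the summand is bounded (by positivity, Assumption \ref{ass:4}) and enjoys the quadratic variance--mean relationship $\var_0\{L_{\eta_1}(\hat\theta_{\alpha,k})-L_{\eta_1}(\theta_0)\}\lesssim\|\hat\theta_{\alpha,k}-\theta_0\|^2$, so a Bernstein inequality together with a discretization of the weight simplex yields a bound of the form $\delta\,\err^2+C_1^2(1+\log n)/n$ — the source of the multiplicative factor $(1+2\delta)^{1/2}$ and the term $C_1\{(1+\log n)/n\}^{1/2}$. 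For $(\mathrm{A}_2)$, Bernstein again gives a contribution of order $\{(\log n)/n\}^{1/2}\,\|D_{\hat\eta_k}-D_{\eta_1}\|$; the crucial point is that, because $\eta_1$ fixes one nuisance component at the truth, the doubly robust structure of $D_\eta$ forces its first-order sensitivity to the \emph{other} component to vanish, so that $\E\|D_{\hat\eta}-D_{\eta_1}\|^2$ is governed by $B_2(\hat\eta,\eta_0)$ up to higher-order products already absorbed into $B_1$. Balancing this contribution with Young's inequality produces the $(\log n/n)^{1/4}B_2^{1/2}$ term.

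Collecting the pieces turns the master inequality into a quadratic inequality of the shape $\err^2(\thetaSL)\le(1+2\delta)\,\err^2(\thetaOR)+C_1^2(1+\log n)/n+c\,B_1\,\err(\thetaSL)+c'(\log n/n)^{1/2}B_2$, and solving it via $x^2\le a+bx\Rightarrow x\le\sqrt a+b$ together with subadditivity of the square root delivers (\ref{eq:or}). I expect the main obstacle to be the simultaneous handling of the empirical process in $(\mathrm{A}_1)$--$(\mathrm{A}_2)$ when the loss is itself indexed by the data-dependent $\hat\eta_k$: one must obtain the sharp Bernstein/variance--mean bound uniformly over the weight simplex (to keep the constant multiplicative while paying only $\log n/n$), and, more delicately, carry out the doubly robust expansions showing that the transformation bias $b_{\hat\eta}$ reduces to the product $B_1$ and that the excess variance $\|D_{\hat\eta}-D_{\eta_1}\|$ collapses onto the oracle-correct components $B_2$. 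These two algebraic facts, verified from the explicit form (\ref{defD})--(\ref{defZ}), are what make the oracle inequality inherit the double robustness of the loss, and constitute the technical heart of the argument.
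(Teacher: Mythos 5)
Your proposal reproduces the paper's overall architecture: the basic inequality coming from $\hat R_{\hat\eta}(\hat\alpha)\le\hat R_{\hat\eta}(\tilde\alpha)$, a split into a main empirical-process term (Bernstein plus the variance--mean relation and a discretization of the simplex, producing the $(1+2\delta)$ factor and $(1+\log n)/n$), a population nuisance-bias term (handled by the doubly robust algebra of Lemma~\ref{lemma:eif} and Cauchy--Schwarz, producing the cross term $B_1\cdot\err$), a second empirical-process term for the difference of losses, and a final quadratic inequality $x^2\le bx+c$ solved as $x\le b+\sqrt{c}$. Your bias bracket is essentially the paper's $Q_1$ analysis and is sound. (Two small slips: with your definition of $\Delta$ the master inequality should read $\widetilde{\mathrm{ex}}(\hat\alpha)\le\widetilde{\mathrm{ex}}(\tilde\alpha)+\Delta(\tilde\alpha)-\Delta(\hat\alpha)$; and since the theorem is stated without a grid, the finite-class bounds must be followed by an argument removing the discretization, which the paper does via Lipschitz continuity of $\alpha\mapsto R(\hat\theta_\alpha)$.)

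The genuine gap is in your term $(\mathrm{A}_2)$. You compare $L_{\hat\eta}$ with $L_{\eta_1}$ and assert that, because $\eta_1$ has one correct component, double robustness forces $\E\|D_{\hat\eta}-D_{\eta_1}\|^2$ to be governed by $B_2(\hat\eta,\eta_0)$ up to products absorbed into $B_1$. That is false: double robustness kills first-order terms in the \emph{conditional mean} $E_0\{D_\eta(O)\mid Z\}$, not in the $L_2(P_0)$ norm. Concretely, suppose $g_1=g_0$ and $h_1\neq h_0$, so that $B_2=\E\|\hat g-g_0\|^2$. Writing $D_{\hat\eta}-D_{\eta_1}=\{D_{(\hat g,\hat h)}-D_{(g_0,\hat h)}\}+\{D_{(g_0,\hat h)}-D_{(g_0,h_1)}\}$, the second brace has conditional mean zero given $Z$ (both transformations are unbiased), but its $L_2$ norm is of \emph{first} order in $\|\hat h-h_1\|$ --- a quantity that appears neither in $B_2$ (which here involves only $\hat g-g_0$) nor in $B_1$ (a product of errors, which can be tiny even when $\|\hat h-h_1\|$ is large). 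So your envelope/variance bound for $(\mathrm{A}_2)$ proves the theorem only with $B_2$ replaced by $\E\{\|\hat g-g_1\|+\|\hat h-h_1\|\}^2$, a strictly weaker statement that additionally demands a convergence rate of the misspecified component to its own limit. The paper's proof avoids this with precisely the device your proposal lacks: the comparator is not $\eta_1$ but the hybrid $\eta^\star$, defined as $(g_0,\hat h_k)$ when $g_1=g_0$ and $h_1\neq h_0$, as $(\hat g_k,h_0)$ when $h_1=h_0$ and $g_1\neq g_0$, and as $(g_0,h_0)$ when both limits are correct. Because one component of $\eta^\star$ is the truth, $D_{\eta^\star}$ is still an unbiased transformation, so all of your mean/bias computations survive verbatim; but now $D_{\hat\eta_k}-D_{\eta^\star}$ differs only in the component whose limit \emph{is} correct, whence $P_0(D_{\hat\eta_k}-D_{\eta^\star})^2\lesssim B_2(\hat\eta,\eta_0)$ exactly (this is a short lemma in the Supplementary Material). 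Substituting $\eta^\star$ for $\eta_1$ throughout your two brackets repairs the argument, and the rest of your proposal then goes through.
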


Note that the terms $B_1(\hat\eta, \eta_0)$ and
$B_2(\hat\eta, \eta_0)$ converge to zero if either $\hat g$ or
$\hat h$ converge to $g_0$ or $h_0$, respectively, in $L_2(P_0)$
norm. This implies that the doubly robust property of $D_\eta(o)$ is
transferred to the oracle inequality. To the best of our knowledge
this result had not been previously shown in the literature.

The super learner $\thetaSL$ may be used to construct a plug-in
estimator of the optimal rule as $\hat d(z)=\one\{\thetaSL(z) >
0\}$. 
The following remark discusses the convergence
rates of the value of the selected rule $V_0(\hat d)$ to the value of
the oracle rule $V_0(\tilde d)$.
\begin{remark}[Convergence rates to the oracle value]\label{remark:rates}
  Assume
  \[B_1(\hat\eta, \eta_0)=O\left((\log n/n)^{1/2}\right),\quad
    B_2(\hat\eta, \eta_0)=O\left((\log n/n)^{1/2}\right).\] Lemma 5.3 of
  \cite{audibert2007fast}, along with Jensen's inequality, show that
  under assumption \ref{ass:ma}, we have
  \[\E\{V_0(\tilde d)-V_0(\hat d)\}\lesssim \{\err^2(\thetaSL) -
    \err^2(\thetaOR)\}^{(1+\lambda)/(2+\lambda)},\] where
  $\tilde d(z)=\one\{\thetaOR(z) > 0\}$ is the oracle rule.  This
  yields the following convergence rate:
  \[\E\{V_0(\tilde d)-V_0(\hat d)\}=O\left((\log
      n/n)^{(1+\lambda)/(2+\lambda)}\right).\] An example of a case
  yielding the above rate is a randomized study
  ($g_{A,0}(w)=q\in(0,1)$) with no censoring ($P_0(\Delta=1)=1$.) In
  this case, a logistic regression fit of $A$ on $W$ containing at least
  an intercept would yield an estimator satisfying
  $||\hat g_A - g_{A,0}||^2=O_P(n^{-1})$. Plugging in the true value
  $g_{R,0}(t,a,w)=0$ for $\hat g_R(t,a,w)$, and assuming $\hat h$ is
  inconsistently estimated yields $B_1(\hat\eta, \eta_0)=O(n^{-1/2})$
  and $B_2(\hat\eta, \eta_0)=O(n^{-1/2})$.
  Under no margin assumption ($\lambda=0$) we get a convergence rate of
  $(\log n/n)^{1/2}$. Under a strong margin assumption in which
  $\theta_0(Z)$ is bounded away from zero ($\lambda = \infty$) we get a
  rate of $\log n/n$.

  The above convergence result establishes the convergence
  of the value of our estimator $\hat d$ to the value of the oracle
  $\tilde d$. This is different from the typical result in the
  classification literature, which establishes convergence to the
  optimal value $V_0(d_0)$. The latter result often involves fast
  learning rates (sometimes faster than $n^{-1}$) and requires restricting
  the class of blip functions considered to H\"older
  \citep{audibert2007fast} or Donsker \citep{luedtke2017faster}
  classes, a restriction we do not impose.

\end{remark}

\section{Super Learner Ensembles for the Optimal Rule from a
  Classification Perspective}\label{sec:sl}

\subsection{Estimators Using the 0-1 Loss Function}
In this section we discuss a classification approach that aims at
directly estimating the optimal rule $\doptz(z)$.  Our approach here
differs from the previous section in that we do not attempt to
estimate the blip function. Instead, we introduce the concept of a
decision function, defined as $f:\mathbf Z\to \R$, and which yields a
treatment rule $d_f(z)=\one\{f(z) > 0\}$. In a slight abuse of
notation we use $V(f)$ to refer to the value of the rule $d_f$. Any
function $f_0$ such that $\sign\{f_0(z)\theta_0(z)\}=1$ has optimal
value $V_0(d_0)$. This provides intuition on the benefits of directly
optimizing the value of the loss function instead of the risk of the
blip function: an inconsistent estimator of the blip function may
provide an optimal rule, as long as its sign is correct. For a given
rule $d_f$, in light of Lemma~\ref{lemma:eif}, we have that
$V_0(f) = E_0\{d_f(Z)D_\eta(O)\}$ if $\eta$ is such that either
$h=h_0$, or $g=g_0$. Thus, a decision function that optimizes the
value of the rule $d_f$ may be found as
\[f_0\in\argmax_f \int d_f(z)D_\eta(o)d P_0(o).\] For a binary value
$b\in\{0,1\}$ and any $X$ we have
$bX = \one\{X>0\}|X| - |X|\one\left[\one\{X>0\} \neq b\right]$. Thus,
the optimization problem may be recast as $f_0\in \mathcal F_0$, where
$\mathcal F_0 = \argmin_f \int L_\eta(o;f) dP_0$ and
\begin{equation}
  L_\eta(o;f)=  |D_\eta(o)|\,\one\left[\one\{D_\eta(o) > 0\}
    \neq d_f(z)\right].\label{eq:zoloss}
\end{equation}
Expression (\ref{eq:zoloss}) is a weighted classification loss
function in which we aim to classify the binary outcome
$\one\{D_\eta(O) > 0\}$ based on data $Z$, using the 0-1 loss function
with weights given by $|D_\eta(O)|$. The objective is to classify an
individual who benefits from treatment arm $A=1$ (i.e., an individual
with $D_\eta(O) > 0$) as requiring treatment (i.e., $d_f(Z)=1$), while
penalizing for the loss $|D_\eta(O)|$ incurred if the individual were
misclassified.

In what follows we consider a library of algorithms for estimation of
the decision function $\mathcal L=\{\hat f_j(z):j,\ldots,J\}$. In
light of the discussion of the previous sections, the most natural
choice for a decision function is the blip function $\hat
\theta(z)$. However, we do not restrict our setup to functions with a
blip interpretation. In addition to estimators of the blip function
$\theta_0(z)$, the library may contain other decision functions such
as the support vector machines proposed by \cite{zhao2015doubly} and
the parametric decision functions of \cite{bai2016optimal}.

We construct an ensemble of the decision functions as
\begin{equation}
  \hat f_\alpha(z) = \sum_{j = 1}^{J}\alpha_j \hat f_j(z),\quad
  \alpha_j\geq 0.\label{eq:fens}
\end{equation}
In this way, we generate an ensemble optimal rule as
$\hat d_\alpha(z) = \one\{\hat f_{\alpha}(z)> 0\}$.  As in the
previous section, we define the super learner selector as
\[\hat\alpha \in
  \argmin_{\alpha} \sum_{i=1}^n\frac{1}{|\mathcal V_{k(i)}|}
  L_{\hat\eta_k(i)}\left(O_i, \hat f_{\alpha,k(i)}\right)\text{
    subject to } \alpha_j\geq 0,\] where $\hat f_{\alpha,k(i)}$ represents
(\ref{eq:fens}) with $\hat f_j(z)$ replaced by $\hat f_{j,k(i)}(z)$:
the $j$-th decision function estimated using the training sample
$\mathcal T_{k(i)}$. The super learner of the decision function is
defined as $\fSL(z)=\hat f_{\hat\alpha}(z)$, and the corresponding
optimal rule is defined as $\dSL=\one\{\fSL(z)> 0\}$.

For $\eta_1=(g_1,h_1)$ such that either $g_1=g_0$ or $h_1=h_0$, the
oracle risk of the decision function is defined as
\[\tilde R_{\eta_1}(\hat f)=\frac{1}{K}\sum_{k=1}^K\int L_{\eta_1}(o,
  \hat f_k) dP_0(o).\] The oracle selector of $\alpha$ is thus defined
as $\tilde\alpha \in \argmin_{\alpha}\tilde R_{\eta_1}(\hat f_\alpha)$,
and we denote $\fOR=\hat f_{\tilde\alpha}$. The excess risk of an
estimator $\hat f$ is equal to
\[\err(\hat f)=\E\{\tilde R_{\eta_1}(\hat f)-\tilde R_{\eta_1}(f_0)\} =
  V_0(f_0) -\E V_0(\hat f).\]

In Theorem \ref{theo:oraclezo} below, we provide bounds on
$\err(\fSL)$ in terms of the excess risk of the oracle selector
$\err(\fOR)$ and the bias terms $B_1(\hat
\eta,\eta_0)$ and $B_2(\hat \eta,\eta_0)$ defined in
Theorem~\ref{theo:oracle}.
\begin{theorem}[Oracle inequality for the super learner of the optimal
  rule]\label{theo:oraclezo}
  Assume the conditions of Theorem~\ref{theo:oracle}. In addition,
  assume that $\hat\alpha$ is computed in a grid of size $Mn^q$ for
  some $M>0$, $q>0$. Then,
  \[0\leq \err(\fSL) \leq \err(\fOR) + C_1(\log n/n)^{1/2} + C_2
    B_1(\hat\eta,\eta_0).\] If condition \ref{ass:ma} holds with
  $\lambda=\infty$, then, for $\delta > 0$
  \[0\leq \err(\fSL) \leq
    (1+2\delta)\err(\fOR) +
    C_1\frac{1+\log n}{n} + C_2 B_1(\hat\eta,\eta_0) + C_3\sqrt{\frac{\log
        n}{n}}B_2(\hat\eta,\eta_0).\]
  for constants $C_1$, $C_2$, and $C_3$, where $B_1$ and $B_2$ are
  defined as in Theorem~\ref{theo:oracle}.
\end{theorem}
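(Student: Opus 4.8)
The plan is to mirror the proof of Theorem~\ref{theo:oracle}, replacing the quadratic loss by the weighted $0$--$1$ loss (\ref{eq:zoloss}). First I would record the identity behind the value representation: integrating $bX=\one\{X>0\}|X|-|X|\one[\one\{X>0\}\neq b]$ with $b=d_f(z)$ and $X=D_{\eta}(o)$ shows that, whenever $\eta$ satisfies $h=h_0$ or $g=g_0$, $\int L_{\eta}(o;f)\,dP_0=c(\eta)-V_0(f)$, where $c(\eta)=\int\one\{D_{\eta}(o)>0\}|D_{\eta}(o)|\,dP_0$ does not depend on $f$. Since $\eta_1$ satisfies the double-robustness condition this gives $\tilde R_{\eta_1}(f)=c(\eta_1)-V_0(f)$ up to the fold averaging, so $f_0$ minimizes $\tilde R_{\eta_1}$ (hence $0\leq\err(\fSL)$) and the excess risk coincides with the value gap $V_0(f_0)-\E V_0(\fSL)$. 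The reduction step is the telescoping bound: setting $\Phi(\hat f_\alpha)=\hat R_{\hat\eta}(\hat f_\alpha)-\tilde R_{\eta_1}(\hat f_\alpha)$ and using that $\hat\alpha$ minimizes $\hat R_{\hat\eta}$ over the grid while $\tilde\alpha$ is the oracle selector, one obtains
\[
\tilde R_{\eta_1}(\fSL)-\tilde R_{\eta_1}(f_0)\;\leq\;\big\{\Phi(\fOR)-\Phi(\fSL)\big\}+\big\{\tilde R_{\eta_1}(\fOR)-\tilde R_{\eta_1}(f_0)\big\},
\]
whose second bracket has expectation $\err(\fOR)$, so it remains to control $\E\{\Phi(\fOR)-\Phi(\fSL)\}$.

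I would split $\Phi$ through the intermediate risk $\bar R_{\hat\eta}(\hat f_\alpha)=\frac1K\sum_k\int L_{\hat\eta_k}(o;\hat f_{\alpha,k})\,dP_0(o)$ into a nuisance-bias part $\bar R_{\hat\eta}-\tilde R_{\eta_1}$ and an empirical-process part $\hat R_{\hat\eta}-\bar R_{\hat\eta}$. For the bias part, the $f$-free constants $c(\hat\eta_k)$ and $c(\eta_1)$ cancel in $\Phi(\fOR)-\Phi(\fSL)$, leaving fold-wise terms $\int d_f(z)\{\bar\theta_{\hat\eta_k}(z)-\theta_0(z)\}\,dP_0(z)$ with $\bar\theta_{\hat\eta_k}(z)=E_0\{D_{\hat\eta_k}(O)\mid Z=z\}$, where I have used $E_0\{D_{\eta_1}(O)\mid Z=z\}=\theta_0(z)$ from Lemma~\ref{lemma:eif}. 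Since $|d_f|\leq1$ these are bounded by $\|\bar\theta_{\hat\eta_k}-\theta_0\|_1$, and the bilinear (mixed-bias) expansion of $\bar\theta_{\hat\eta_k}-\theta_0$ in $(\hat h-h_0)$ and $(\hat g-g_0)$---the same expansion producing the $B_1$ term in Theorem~\ref{theo:oracle}---bounds its expectation by a constant multiple of $B_1(\hat\eta,\eta_0)$. This yields the $C_2 B_1$ summand in both displays.

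For the empirical-process part I would discretize the simplex using the assumed grid of cardinality $Mn^q$ and apply a maximal inequality over this finite set. In the slow-rate display no variance refinement is used: boundedness of $D_{\eta}$ under positivity (Assumption~\ref{ass:4}) makes $L_{\hat\eta_k}$ uniformly bounded, so a Hoeffding bound with a union over the grid gives $\E\{\Phi(\fOR)-\Phi(\fSL)\}\lesssim\sqrt{(q\log n+\log M)/n}\lesssim(\log n/n)^{1/2}$ and the coefficient-one oracle inequality. In the fast-rate display (Assumption~\ref{ass:ma} with $\lambda=\infty$) I would replace Hoeffding by a Bernstein argument after writing $L_{\hat\eta_k}=L_{\eta_1}+(L_{\hat\eta_k}-L_{\eta_1})$. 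For the clean piece, $|L_{\eta_1}(o;f)-L_{\eta_1}(o;f_0)|=|D_{\eta_1}(o)|\,\one\{d_f(z)\neq d_0(z)\}$, so $\var_0\{L_{\eta_1}(\cdot;f)-L_{\eta_1}(\cdot;f_0)\}\lesssim P_0\{d_f\neq d_0\}$; with $\lambda=\infty$ the margin forces $P_0\{d_f\neq d_0\}\lesssim V_0(f_0)-V_0(f)$, a linear variance--excess-risk link, and Bernstein over the grid combined with $ab\leq\delta a^2+b^2/(4\delta)$ produces the $(1+2\delta)\err(\fOR)$ factor and the $C_1(1+\log n)/n$ term. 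The fluctuation piece $L_{\hat\eta_k}-L_{\eta_1}$ is treated as in Theorem~\ref{theo:oracle}: its centered cross-validated average over the grid is governed by the single-nuisance quantity $B_2(\hat\eta,\eta_0)$ and, on the excess-risk (squared) scale of the classification problem, contributes the $C_3\sqrt{\log n/n}\,B_2$ summand.

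The main obstacle is the fast-rate empirical-process step. One must simultaneously (i) localize via the margin-driven variance bound on the clean loss $L_{\eta_1}$ to reach the $\log n/n$ rate through the Bernstein/absorption mechanism, and (ii) isolate the estimated-nuisance fluctuation $L_{\hat\eta_k}-L_{\eta_1}$ so that it is controlled by the \emph{single}-nuisance error $B_2$ rather than degrading the leading $\err(\fOR)$ term. These two effects must be kept decoupled uniformly over the data-dependent selectors $\fSL$ and $\fOR$, which is precisely why the grid discretization is imposed; and because the loss (\ref{eq:zoloss}) is non-smooth, the fluctuation $L_{\hat\eta_k}-L_{\eta_1}$ carries a sign-flip indicator $\one[\one\{D_{\hat\eta_k}(o)>0\}\neq\one\{D_{\eta_1}(o)>0\}]$ whose variance must be bounded carefully---this bookkeeping is the most delicate part of the argument.
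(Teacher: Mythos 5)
Your architecture matches the paper's almost step for step: linearize the weighted 0--1 loss to $-D_\eta(o)d_f(z)$ so that risk differences are value differences, decompose the excess risk into the oracle excess risk plus an empirical-process term and a nuisance-bias term, bound the bias term by $B_1$ via the mixed-bias expansion from the proof of Lemma~\ref{lemma:eif}, use a union/Hoeffding bound over the $Mn^q$ grid for the slow-rate display, and under $\lambda=\infty$ use the variance--excess-risk link (this is exactly the paper's Lemma~\ref{lemma:boundzo}) plus a Bernstein-type maximal inequality (the paper invokes Lemma 3 of van der Laan and Dudoit, 2003) to obtain the $(1+2\delta)$ factor and the $(1+\log n)/n$ term.

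There is, however, one genuine gap, located precisely at the step you flag as delicate: you take the \emph{limit} nuisance $\eta_1=(g_1,h_1)$ as the reference for the fluctuation piece and assert that $(P_{n,k}-P_0)(L_{\hat\eta_k}-L_{\eta_1})$ is governed by $B_2(\hat\eta,\eta_0)$. That assertion fails. With the linearized loss, $(L_{\hat\eta_k}-L_{\eta_1})(f)=-(D_{\hat\eta_k}-D_{\eta_1})d_f$ up to $f$-free terms, and $P_0(D_{\hat\eta_k}-D_{\eta_1})^2$ involves the errors of \emph{both} nuisance components; in particular it contains $\|\hat h-h_1\|$ when $h_1\neq h_0$ (for instance through the terms $S(m,1,W)-S(m,0,W)$, which depend on $h$ alone). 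But $B_2(\hat\eta,\eta_0)$ only measures the error of the consistently estimated component, so your variance bound can be arbitrarily larger than $B_2$: take $\|\hat g-g_0\|\asymp n^{-1/2}$ (so $B_2\asymp n^{-1}$) while $\|\hat h-h_1\|\asymp n^{-1/10}$. The paper's device is the fold-specific pseudo-true nuisance $\eta^\star$, equal to $(g_0,\hat h_k)$ when $g_1=g_0$ and $h_1\neq h_0$ (and symmetrically in the other cases): because $\eta^\star$ \emph{shares} the estimated inconsistent component with $\hat\eta_k$, the difference $D_{\hat\eta_k}-D_{\eta^\star}$ depends only on the consistent component's error, and the paper's last auxiliary lemma gives $\E\,P_0(D_{\hat\eta_k}-D_{\eta^\star})^2\lesssim B_2(\hat\eta,\eta_0)$. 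Since one component of $\eta^\star$ is exactly true, Lemma~\ref{lemma:eif} still gives $E_0\{D_{\eta^\star}(O)\mid Z\}=\theta_0(Z)$, so the centered oracle risk under $\eta^\star$ equals the same value gap as under $\eta_1$; replacing $\eta_1$ by $\eta^\star$ throughout your decomposition therefore repairs the argument with no other changes, and your clean-piece Bernstein step goes through verbatim with $D_{\eta^\star}$ in place of $D_{\eta_1}$. This substitution also disposes of your closing worry about the sign-flip indicator: with the linearized loss the cross-nuisance fluctuation is linear in $D_{\hat\eta_k}-D_{\eta^\star}$ plus $f$-free terms, so no indicator of a sign disagreement between $D_{\hat\eta_k}$ and the reference transformation ever enters the bound.
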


\begin{remark} Assume $B_1(\hat\eta,\eta_0)$ converges as in
  Remark~\ref{remark:rates}. An immediate consequence of the above
  result is that under no margin assumption, we have
  $\E\{V_0(\dOR)-V_0(\dSL)\}=O\left((\log n/n)^{1/2}\right)$. Under
  the strong margin assumption \ref{ass:ma} with $\lambda=\infty$ we
  have $\E\{V_0(\dOR)-V_0(\dSL)\}=O(\log n/n)$. These rates are
  identical to the rates obtained in Remark~\ref{remark:rates} for the
  plug-in estimator. The question of whether analogous convergence
  rates may be obtained for other values of $\lambda$ under condition
  \ref{ass:ma} remains an open problem.
\end{remark}

In comparison to Theorem~\ref{theo:oracle},
Theorem~\ref{theo:oraclezo} has the additional assumption that the
optimization of the loss function is carried out in a grid polynomial
size in $n$. Inspection of the proofs of the theorems in the
Supplementary Material reveals the reason for the additional
assumption: the 0-1 loss function is non-smooth and the Lipschitz
condition used in the proof of Theorem~\ref{theo:oracle} does not
apply. As demonstrated in our data application, this assumption is
likely to have little practical consequences, but it is unclear to us
whether it can be removed.

\subsection{Using a Surrogate Loss Function for the 0-1 Loss}\label{sec:su}
It is well known in the statistical learning literature that
minimizing (\ref{eq:zoloss}) is generally difficult due to the
discontinuity and non-convexity of the 0-1 loss. A common approach to
mitigate the issues arising from the discontinuity and non-convexity
of the 0-1 loss function is to use surrogates loss functions, such as
the logistic loss $\phi(x)=\log(1+\exp(-x))$ or the hinge loss
$\phi(x)=\max(1-x,0)$. We have the following result, which teaches us
that any decision function $d_{f_0}(z)$ based on a decision function
$f_0\in \mathcal F_{\phi,0}$ has the same performance as the optimal
rule $\doptz(z)$.
\begin{lemma}\label{lemma:suloss}
  Assume $\eta$ is such that either $h=h_0$, or
  $g=g_0$. Define
  \begin{equation}
    \mathcal F_{\phi,0}=\argmin_f \int L_{\phi,\eta}(o;f) dP_0(o),\label{eq:suloss}
  \end{equation}
  where the surrogate loss $L_{\phi,\eta}$ is defined as
  \[L_{\phi,\eta}(o;f)=|D_\eta(o)|\,\phi\big( f(z)\left[2
      \one\{D_\eta(o) > 0\} - 1\right]\big).\] Then we have
  $\mathcal F_{\phi,0}\subseteq \mathcal F_0$, where
  $\mathcal F_0 = \argmin_f \int L_\eta(o;f) dP_0$.
\end{lemma}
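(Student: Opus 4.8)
The plan is to prove the pointwise (conditional) statement that minimizing the surrogate loss recovers the sign of the blip, and then lift it to the population statement $\mathcal{F}_{\phi,0}\subseteq\mathcal{F}_0$. The key observation is that both loss functions, when integrated against $P_0$, can be analyzed by first conditioning on $Z=z$. For a fixed $z$, the value $f(z)$ is just a real number, so the population risk decomposes as an integral over $z$ of a conditional risk that depends on $f$ only through the single scalar $f(z)$. This reduces the infinite-dimensional minimization to a family of one-dimensional problems indexed by $z$, which is the standard route in the classification-calibration literature.

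\medskip

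\noindent\textbf{Key steps.} First I would define, for each $z$, the conditional expectation $\rho(z)=E_0[|D_\eta(O)|\,\one\{D_\eta(O)>0\}\mid Z=z]$ and $\pi(z)=E_0[|D_\eta(O)|\,\one\{D_\eta(O)\leq 0\}\mid Z=z]$, the weighted conditional probabilities of the positive and negative classes. Writing $u=f(z)$, the conditional surrogate risk at $z$ becomes $H(u)=\rho(z)\phi(u)+\pi(z)\phi(-u)$, since $2\one\{D_\eta(O)>0\}-1$ equals $+1$ on the positive event and $-1$ on the negative event. Next I would invoke the fact that $\theta_0(z)=E_0\{D_\eta(O)\mid Z=z\}$ by Lemma~\ref{lemma:eif} (valid under the double-robustness condition on $\eta$), and note that $\theta_0(z)=\rho(z)-\pi(z)$, so $\sign\{\theta_0(z)\}=\sign\{\rho(z)-\pi(z)\}$. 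The crux is then a convexity/calibration argument: for a convex, differentiable, decreasing surrogate $\phi$ (the logistic and hinge losses qualify, the latter via subgradients), any minimizer $u^\star$ of $H$ satisfies $\sign(u^\star)=\sign\{\rho(z)-\pi(z)\}$ whenever $\rho(z)\neq\pi(z)$, i.e. whenever $\theta_0(z)\neq 0$. This is the standard classification-calibration property of these surrogates.

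\medskip

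\noindent\textbf{Lifting to the population statement.} Since the conditional risks decouple across $z$, a function $f_0$ minimizes the integrated surrogate risk in \eqref{eq:suloss} if and only if $f_0(z)$ minimizes $H$ for $P_{0,W}$-almost every $z$. By the calibration step, such an $f_0$ satisfies $\sign\{f_0(z)\}=\sign\{\theta_0(z)\}$ wherever $\theta_0(z)\neq 0$, hence $d_{f_0}(z)=\one\{f_0(z)>0\}=\one\{\theta_0(z)>0\}=\dopt(z)$ on that set; on the set $\{\theta_0(z)=0\}$ the choice is immaterial because it contributes zero to the weighted $0$-$1$ loss. I would then verify directly that any $f$ with $d_f=\dopt$ (up to the null set $\{\theta_0=0\}$) minimizes $\int L_\eta(o;f)\,dP_0$, using the representation $V_0(f)=E_0\{d_f(Z)D_\eta(O)\}$ and the identity relating $L_\eta$ to the value established just before \eqref{eq:zoloss}. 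Combining these gives $f_0\in\mathcal{F}_0$, i.e. $\mathcal{F}_{\phi,0}\subseteq\mathcal{F}_0$.

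\medskip

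\noindent\textbf{Main obstacle.} The delicate point is handling the set where $\theta_0(z)=0$ and, relatedly, the non-differentiability of the hinge loss. On $\{\theta_0(z)=0\}$ a surrogate minimizer need not have a determinate sign, so I must argue that this set is exactly where the $0$-$1$ loss assigns zero weight to misclassification, ensuring any surrogate-optimal $f_0$ still lies in $\mathcal{F}_0$ regardless of its sign there. For the hinge loss I would work with subdifferentials rather than derivatives, checking that $0\in\partial H(u^\star)$ forces $u^\star$ to have the correct sign when $\rho(z)\neq\pi(z)$; this requires care because the hinge loss is flat for large arguments, so one argues about the sign of any minimizer rather than its exact location. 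I expect this calibration bookkeeping, rather than any deep inequality, to be the technically fussy step.
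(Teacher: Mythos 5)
Your proof is correct in substance, but it is far more self-contained than the paper's, whose entire proof is a citation: the paper invokes Theorems 1 (part 3) and 2 of \cite{bartlett2006convexity} (classification calibration of convex surrogates) together with Theorem 5 of \cite{luedtke2016super}, which transfers that theory to the weighted-classification problem arising here. What you do differently is to reprove the calibration property from scratch: you condition on $Z=z$, reduce the surrogate risk to the one-dimensional function $H(u)=\rho(z)\phi(u)+\pi(z)\phi(-u)$ with weighted class quantities $\rho,\pi$, identify $\theta_0(z)=\rho(z)-\pi(z)$ via Lemma~\ref{lemma:eif}, and check that logistic and hinge minimizers carry the sign of $\rho(z)-\pi(z)$. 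This is exactly the mechanism inside the cited theorems, so the underlying mathematics coincides; what your version buys is (i) self-containedness and (ii) an explicit treatment of the $|D_\eta(O)|$-weighting, which the paper delegates to \cite{luedtke2016super} --- your observation that the weighted conditional risk has the same two-point form as an unweighted classification risk is the right way to make that reduction. What the paper's route buys is brevity and generality: the cited results cover every classification-calibrated convex loss (any convex $\phi$ differentiable at $0$ with $\phi'(0)<0$), not just your two examples, and they absorb the technicalities your pointwise ``if and only if'' lifting step still owes, namely measurable selection of conditional minimizers and attainment of the conditional infimum (for the logistic loss the minimizer escapes to infinity when $\pi(z)=0$, so $\mathcal F_{\phi,0}$ can be empty; the containment is then vacuous, but this should be said). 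One small imprecision to fix: on $\{\theta_0(Z)=0\}$ the two treatment choices do not contribute zero to the weighted $0$--$1$ risk; they contribute the equal amounts $\rho(z)=\pi(z)$, so it is the regret, not the risk, that vanishes there. The conclusion you draw from it is still correct.
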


\section{Estimating the Optimal Treatment for Breast
  Cancer Patients in our Motivating Application}\label{sec:aplica}

Different types of human breast cancer tumors have been shown to have
heterogeneous response to treatments \citep{perou2000, sotiriou2009}.
Amplification of ERBB2 gene and associated overexpression of human
epidermal growth factor receptor (HER2) encoded by this gene occur in
25-30$\%$ of breast cancers \citep{slamon2001use}. HER2-positive
breast cancer is an aggressive form of the disease and the prognosis
for such patients is generally poor \citep{slamon1987human,
  seshadri1993}. The clinical efficacy of adjuvant trastuzumab, a
recombinant monoclonal antibody, in early stage HER2-positive patients
was demonstrated by several large clinical trials
\citep{perez2011four, romond2005}. Despite significant improvement in
disease-free and overall survival of patients treated with
trastuzumab, about 20-25$\%$ patients relapse within 3-5 years
\citep{perez2011four}. In this paper we use data from the North
Central Cancer Treatment Group N9831 study, a phase III randomized
clinical trial testing the addition of trastuzumab to chemotherapy in
stage I-III HER2-positive breast cancer.

The total number of patients enrolled in the NCCTG N9831 trial was
3,505. Samples from 1,390 patients, for whom there was available
tissue, were used to quantify mRNA from a custom codeset of 730 genes
created by experts. The available baseline variables may be thus be
categorized in three classes: demographic (e.g, race, age, ethnicity),
clinical (e.g., tumor grade, tumors size, nodal status, hormone
receptor status), and gene expression. Among the 1,390 patients, 483
received chemotherapy alone (control arm) and 907 patients received
chemotherapy plus trastuzumab (treatment arm).

The clinical challenge is to identify genetic and demographic profiles
for patients with HER2-positive breast cancer who are unlikely to
benefit from adjuvant trastuzumab.

In order to estimate and assess the performance of the estimated rule
using different datasets, we split our data into training and
validation datasets, of sizes 1000 and 390, respectively.
\subsection{Estimators of $h$ and $g_R$}\label{sec:slhg}
According to our theoretical results, the optimality of the estimated
treatment rules hinges upon consistent estimation of at least one of
the nuisance parameters $h$ and $g_R$. As a result, it is crucial to
employ flexible methods capable of unveiling complex patterns which
are not visible to the human eye. As demonstrated below in
Section~\ref{sec:assess}, simple parsimonious models such as the Cox
proportional hazards or logistic regression fail to detect these
complex relations in the data.

In order to accurately estimate the nuisance parameters, we use an
ensemble learner known as the super learner for prediction
\citep{vanderLaan&Polley&Hubbard07}. We train the ensemble separately
using data from each treatment arm, in order to fully account for
treatment-covariate interactions. Like our rule ensembles, super
learning predictors build a combination of candidate predictors that
minimize a cross-validated user-supplied risk function. Since $g_R$
and $h$ are conditional probabilities, we focus on logistic regression
ensembles and the negative log-likelihood loss function, using the R
implementation in the SuperLearner package \citep{SL}. The candidate
estimators included in the ensembles are listed in
Table~\ref{tab:coefshg}, along with the coefficients
of each predictor in the ensemble, when trained in the complete
dataset using 5-fold cross-validation.

\begin{table}[!htb]
    \begin{tabular}{rrrrrrrr}
      \hline
      & &  RF & XGB & MLP & GLM & MARS & LASSO \\
      \hline
      \multirow{2}{*}{$\hat g_R$} & $A=1$ & 0.100 & 0.056 & 0.000 & 0.000 & 0.312 & 0.532 \\
      & $A=0$ & 0.000 & 0.301 & 0.000 & 0.000 & 0.294 & 0.405 \\\hline
      \multirow{2}{*}{$\hat h$}   & $A=1$ & 0.023 & 0.050 & 0.000 & 0.000 & 0.237 & 0.691 \\
      & $A=0$ & 0.154 & 0.086 & 0.098 & 0.060 & 0.123 & 0.480 \\
      \hline
    \end{tabular}
    \caption{Coefficients of the super learner ensemble for estimation of
      $g_R$ and $h$. RF is random forests, XGB is extreme gradient
      boosting, MLP is multilayer perceptron, GLM is logistic
      regression, MARS is multivariate adaptive splines, and LASSO is $L_1$ regularized logistic regression.}
  \label{tab:coefshg}
\end{table}

For random forests, extreme gradient boosting, and multilayer
perceptron, the tuning parameters are tuned using data splitting with
the aid of the R caret package \citep{caret}. To avoid $p>n$, logistic
regression and multivariate adaptive splies are estimated with a
variable screening algorithm which computes univariate t-statistics
and keeps only the 50 variables with a larger value.

\subsection{Candidate Estimators for the Optimal Treatment Rule}

According to our discussion in Sections~\ref{sec:estimate} and
\ref{sec:sl}, there are at least three types of estimators for the
optimal rule $\dopt$. The first type is a simple substitution
estimator, obtained through inspection of equation~(\ref{deftheta}),
which consists in regressing the blip function
$\hat B(W)=\sum_{t=1}^{\tau-1}\{\hat S(t, 1, W) - \hat S(t,0,W)\}$ on
$Z$, where $\hat S$ is the estimator of the survival function
corresponding to the estimator $\hat h$ described in
Section~\ref{sec:slhg}. The second type is obtained through regression
of the unbiased transformation $D_{\hat\eta}(O)$ on $Z$.  The third
type of estimation methods is obtained based on
equation~(\ref{eq:zoloss}), and is obtained by classifying the binary
outcome $\one\{D_{\hat \eta}(O)>0\}$ as a function of $Z$, with
weights given by $|D_{\hat\eta}(O)|$. Here, $\hat\eta=(\hat g_R, \hat h)$,
where the components of $\hat\eta$ are as described in
Section~\ref{sec:slhg}. Any regression or supervised classification
technique available in the statistical learning literature may be used
as a candidate for solving these problems.

In our application, we focus on the following candidates for
estimating $\dopt$:

\vspace{.5cm}

\begin{tabular}{lp{11.5cm}}
  \textsc{B-Reg} &Regression of the blip function $\hat B(W)$ using super learning with
                   candidate learners as described in Table~\ref{tab:coefshg}.\\[0.2cm]
  \textsc{D-Reg} &Regression of the doubly robust transformation $D_{\hat\eta}(O)$ using super learning with
                   candidate learners as described in Table~\ref{tab:coefshg}.\\[0.2cm]
  \textsc{D-Class-RF} &Weighted classification of $\one\{D_{\hat \eta}(O)>0\}$
                        using random forests.\\[0.2cm]
  \textsc{D-Class-XGB} &Weighted classification of $\one\{D_{\hat \eta}(O)>0\}$
                         using extreme gradient boosting.\\[0.2cm]
  \textsc{D-Class-GLM} &Weighted classification of $\one\{D_{\hat \eta}(O)>0\}$
                         using logistic regression.\\
\end{tabular}

\vspace{.5cm}

According to our discussion in Sections~\ref{sec:estimate} and
\ref{sec:sl}, we also train four super learning ensembles of the above
candidate estimators, using different loss functions:

\vspace{.5cm}

\begin{tabular}{lp{11cm}}
  \textsc{SL-Reg}         & Regression ensemble minimizing the
                            expected quadratic loss function. \\[0.2cm]
  \textsc{SL-Class-01}    & Classification ensemble minimizing the expected 0-1 loss function.                               \\[0.2cm]
  \textsc{SL-Class-Hinge} & Classification ensemble with surrogate hinge loss function.                                      \\[0.2cm]
  \textsc{SL-Class-Log}   & Classification ensemble with surrogate log loss function.
\end{tabular}

\vspace{.5cm}

The coefficients of each candidate estimator in each ensemble are
presented in Table~\ref{tab:coefs}. These coefficients were computed using the
Subplex \citep{Rowan1990} routine implemented in the NLopt
nonlinear-optimization R package. For improved robustness, the 0-1
loss was optimized using 1000 different random starting values.

\begin{table}[!htb]
    \begin{tabular}{rrrrr}
      \hline
      & \textsc{SL-Reg} & \textsc{SL-Class-0-1} & \textsc{SL-Class-Hinge} & \textsc{SL-Class-Log} \\
      \hline
      \textsc{D-Class-RF} & 0.000 & 0.005 & 0.000 & 0.000 \\
      \textsc{D-Class-XGB} & 0.792 & 0.001 & 0.945 & 0.869 \\
      \textsc{D-Class-GLM} & 0.000 & 0.031 & 0.000 & 0.000 \\
      \textsc{D-Reg} & 0.007 & 0.017 & 0.001 & 0.006 \\
      \textsc{B-Reg} & 0.201 & 0.947 & 0.054 & 0.125 \\
      \hline
    \end{tabular}
    \caption{Coefficients of each candidate in each ensemble (standardized
      to sum one). The rows represent the candidates, the columns the
      ensemble.}
  \label{tab:coefs}
\end{table}

\subsection{Assessing the Performance of The Estimated Treatment Rule}\label{sec:assess}

Once each rule is estimated using only data in the training dataset,
its value $V(\hat d)$ is estimated on the validation dataset. To that
effect, we use the targeted minimum loss based estimator of the
restricted mean survival time proposed by \cite{diaz2015improved}
\citep[See also][]{Moore11}.

Figure~\ref{fig:values} presents the estimated restricted mean
survival time obtained with each estimated rule, along with 95\%
confidence intervals. For comparison, we also present the value of two
static rules of interest: never treat and always treat. As is clear
from the figure, the best algorithm in our application is regression
of the blip function. All super learning ensembles yield a similar
value, demonstrating the oracle property of the super
learner. Treating patients according to the optimal rule yields a
restricted mean survival of 157.1 (s.d. 3.1) months. In comparison
with the \textit{always treat} rule, which yields 151.2 (s.d., 3.3)
months, the optimal rule improves mean patient survival by 6 months.

According to Table~\ref{tab:coefs}, only the super learning ensemble
based on the 0-1 loss assigns a large weight to the best algorithm. In
fact, its restricted mean survival time (see Figure~\ref{fig:values})
is identical to that of the optimal rule. The other ensembles assign
more weight to the second best algorithm, weighted classification
using extreme gradient boosting, and have slightly smaller restricted
mean survival time. This is in agreement with our theoretical findings
that the best performance is obtained using the 0-1 loss function.

It is also worth noting that three of the estimated rules (weighted
classification using random forests and logistic regression, and
regression of the function $D_{\hat\eta}$) yield a restricted mean
survival time smaller or equal than the restricted mean survival time
of the static rule \textit{always treat}.
\begin{figure}[!hbt]
  \centering
  \includegraphics[scale = 0.3]{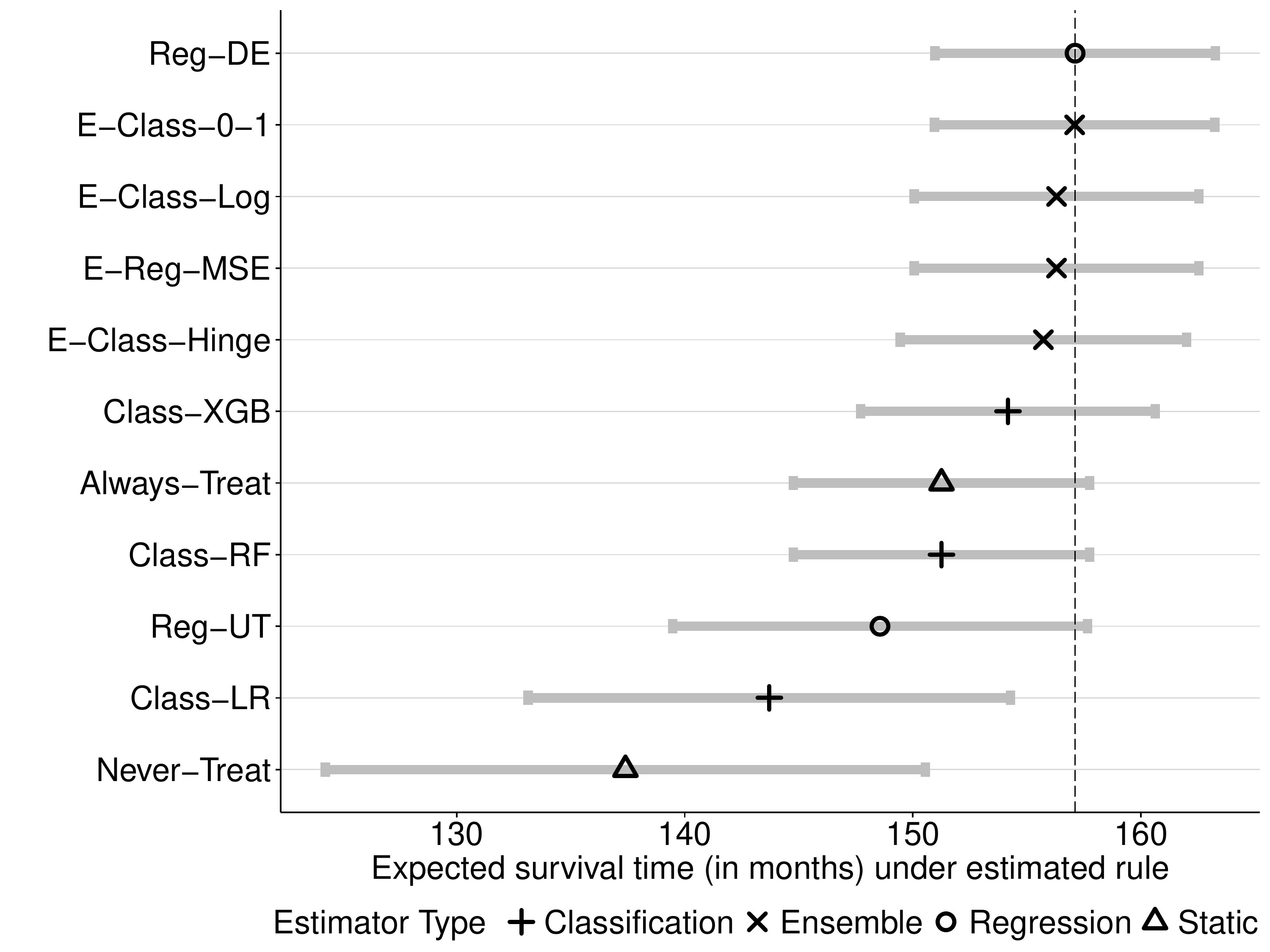}
  \caption{Restricted mean survival time estimated in the validation set, for different estimated
    rules. The bars represent 95\% confidence intervals. }\label{fig:values}
\end{figure}

Table~3 in the Supplementary Materials shows the p-value
for the pair-wise comparisons of the value of each estimated rule. A
few interesting points to note are:
\begin{enumerate}[(i)]
\item The ensemble using the 0-1 loss function outperforms the other
  ensembles, the difference in values is small but significant at 5\%
  level.
\item The value of the optimal rule, which is obtained through regression of the
  blip function (see Figure~\ref{fig:values}), is significantly
  different from all other rules, except the ensemble using the 0-1
  loss function. This illustrates the theoretical property of
  super-learning stating that the risk of the ensemble converges to
  the risk of the best candidate in the library.
\item Weighted classification using logistic regression, which is
  often advocated because it yields parsimonious rules
  \citep[e.g.,][]{zhang2015using}, has a value significantly lower than the
  static rule \textit{always treat}.
\end{enumerate}

In our application, we have decided to use data splitting to train and
assess the performance of the estimated rules. Though correct, this
approach may be unnecessary, since the value of the rule may be
assessed using the training dataset, under certain conditions derived
by \cite{luedtke2016statistical}.

\section{Discussion}

We present two methods for constructing an ensemble individualized
treatment rule. The methods are based on a plug-in estimator
optimizing the prediction error of the blip function, and a weighted
classification approach which directly estimates the decision
function. Though we found no theoretical differences between the two
approaches in terms of their asymptotic properties, the classification
ensemble using the 0-1 loss function yielded better treatment rules
than the other approaches in our illustrative application. The
superiority of the classification approach has been recognized
before \citep[e.g.,][]{zhao2012estimating}, and is a consequence of
the fact that it emphasizes optimizing the decision rule rather than
prediction accuracy emphasized by the blip approach.

We consider a survival time measured in a discrete time scale. Most
clinical research studies measure time to event in a discrete scale.
In our motivating application, time to relapse of cancer or death was
measured in days. We foresee no technical difficulties in extending
our approach to consider a continuous time to event. This can be
achieved by replacing discrete time hazards by their continuous
counterpart, as well as replacing certain sums over time by the
appropriate martingale integrals \cite[see][]{bai2016optimal} in the
definition of the censoring unbiased transformation $D_\eta(O)$. A
potential practical limitation is that the software and literature for
data-adaptive machine learning estimation of continuous time hazards
(required for the nuisance parameters) is scarce in comparison to that
of binary classification, which may be used for estimation of discrete
time hazards. Among the few methods that can be used for this problem
are (semi)-parametric models such as Cox regression and accelerated
failure time models. Available data adaptive approaches include
survival random forests and regularized Cox regression. If time is
measured on a continuous scale, implementation of our methods requires
discretization.  The specific choice of the discretization intervals
may be guided by what is clinically relevant.  For example, in cancer
research, the clinically relevant scale would typically be a week or a
month. In the absence of clinical criteria to guide the choice of
discretization level, a concern is that too coarse of a discretization
may lead to relevant information loss. A question for future research
is how to optimally set the level of discretization in order to trade
off information loss versus estimator precision. Another area for
future research is to consider discretization levels that get finer
with sample size.

We present doubly robust oracle inequalities and convergence rates
assuming (i) an exposure of interest that occurs at baseline, and (ii)
censoring which is confounded with the time to event only by baseline
variables. We conjecture that our general results apply to the more
general case of a dynamic treatment regime with a time-varying
treatment and time-varying confounders.  Such results will be the
subject of future research.

In our definition of oracle risk and oracle selector we have used a
nuisance parameter $\eta_1=(g_1,h_1)$ satisfying either $h_1=h_0$, or
$g_1=g_0$. Therefore these oracle quantities change depending on which
of the two nuisance parameters is correctly specified. According to
efficient estimation theory in semi-parametric models, we expect the
case $\eta_1=\eta_0$ to yield oracle quantities with minimal
variability. In the single misspecification case in which $h_1=h_0$ or
$g_1=g_0$ but not both, it is unclear to us whether misspecification
of one of the models yields better results than the other. Lastly, our
setup includes as particular case the inverse probability weighted
loss function, which may be obtained by using a constant estimator
$\hat h(t,a,w)=1$, as well as the g-computation loss function, which
is obtained by using $\hat g_A(a,w)=1$ and $\hat g_R(t,a,w)=0$.
\bibliographystyle{plainnat} \bibliography{tmle}

\section{Supplementary Material}
\subsection{Motivating application}

\begin{table}[!htb]
  \centering
  \caption{P-values of pair-wise comparisons of the value of each rule
    estimated in the validation data.}
  {\small    \begin{tabular}{rrrrrrrrrrrr}
               \hline
               \textsc{\small SL-Class-Log}   & 0.026 &       &       &       &       &       &       &       &       &       \\
               \textsc{\small SL-Class-Hinge} & 0.026 & 0.159 &       &       &       &       &       &       &       &       \\
               \textsc{\small SL-Reg}         & 0.025 & 0.160 & 0.443 &       &       &       &       &       &       &       \\
               \textsc{\small D-Class-RF}     & 0.001 & 0.001 & 0.001 & 0.001 &       &       &       &       &       &       \\
               \textsc{\small D-Class-XGB}    & 0.001 & 0.009 & 0.007 & 0.007 & 0.003 &       &       &       &       &       \\
               \textsc{\small D-Class-GLM}    & 0.001 & 0.001 & 0.001 & 0.001 & 0.011 & 0.001 &       &       &       &       \\
               \textsc{\small D-Reg}          & 0.001 & 0.004 & 0.002 & 0.002 & 0.158 & 0.021 & 0.056 &       &       &       \\
               \textsc{\small B-Reg}          & 0.120 & 0.025 & 0.023 & 0.023 & 0.001 & 0.001 & 0.001 & 0.001 &       &       \\
               \textsc{\small Always-Treat}   & 0.001 & 0.001 & 0.001 & 0.001 &       & 0.003 & 0.011 & 0.158 & 0.001 &       \\
               \textsc{\small Never-Treat}    & 0.001 & 0.001 & 0.001 & 0.001 & 0.001 & 0.001 & 0.036 & 0.002 & 0.001 & 0.001 \\
               \hline
                                              & \rot{\textsc{\small SL-Class-0-1}} &
                                                                                     \rot{\textsc{\small SL-Class-Log}} &\rot{\textsc{\small SL-Class-Hinge\,\,\,\,}}  & \rot{\textsc{\small SL-Reg}} & \rot{\textsc{\small D-Class-RF}} & \rot{\textsc{\small D-Class-XGB}} & \rot{\textsc{\small D-Class-GLM}}  & \rot{\textsc{\small D-Reg}}   & \rot{\textsc{\small B-Reg}} & \rot{\textsc{\small Always-Treat}}
             \end{tabular}}
           \label{tab:pvals}
         \end{table}

         \subsection{Proofs of Theorems and Lemmas}

         \subsubsection{Lemma~\ref{lemma:eif}}
         \begin{proof}
           For simplicity, consider the treatment-time-specific function
           \[D_{m,a,\eta}(O)=-\sum_{t=1}^m\frac{\one\{A=a\}I_t}{g_A(a,W)G(t,a,W)}\frac{S(m,a,W)}{S(t,a,W)}\{L_t-h(t,a,W)\}
             + S(m,a,W),\] and note that
           $D_\eta=\sum_{m=1}^{\tau-1}(D_{m,1,\eta}-D_{m,0,\eta})$. For a function $f(t,a,w)$ we denote
           $Pf(t)=\int f(t,a,w)dP(w)$. Conditioning first on $W$ in the above
           display yields
           \[E_0\{D_{m,a,\eta_0}\mid Z\}=E_0\left\{\prod_{t=1}^m\{1-h_0(t)\}\mid Z\right\}.\]
           Thus, we have
           \begin{align*}
             E_0(D&_{m,a,\eta}\mid Z) -
                    E_0\left\{\prod_{t=1}^m\{1-h_0(t)\}\mid Z\right\}\\
             =&E_0\left[\sum_{t=1}^m-\frac{S(m)}{S(t)}\frac{g_{A,0}}{g_A}\frac{G_0(t)}{G(t)}S_0(t)\{h_0(t)-h(t)\}
                + \prod_{t=1}^m\{1-h(t)\} - \prod_{t=1}^m\{1-h_0(t)\}\,\bigg
                | Z\right]\\
             =&\sum_{t=1}^mE_0\left[-\frac{S(m)}{S(t)}\frac{g_{A,0}}{g_A}\frac{G_0(t)}{G(t)}S_0(t-1)\{h_0(t)-h(t)\}
                + S_0(t-1)\{h_0(t)-h(t)\}\frac{S(m)}{S(t)}\,\bigg
                | Z\right]\\
             =&\sum_{t=1}^mE_0\left[-\frac{S(m)}{S(t)}S_0(t-1)\{h_0(t)-h(t)\}\left\{\frac{g_{A,0}}{g_A}\frac{G_0(t)}{G(t)}-1\right\}\,\bigg
                | Z
                \right]\\
             =&\sum_{t=1}^mE_0\left[-\frac{S(m)}{S(t)}S_0(t-1)\{h_0(t)-h(t)\}\left\{\frac{g_{A,0}}{g_AG(t)}\{G_0(t)-G(t)\}
                + \frac{1}{g_A}(g_{A,0}-g_A)\right\}\,\bigg
                | Z
                \right]\\
             =&\sum_{t=1}^mE_0\left[-\frac{S(m)}{S(t)}S_0(t-1)\{h_0(t)-h(t)\}\left\{\frac{g_{A,0}}{g_AG(t)}\sum_{k=0}^{t-1}G_0(k)\{g_{R,0}(k)
                - g_R(k)\}\frac{G(t)}{G(k+1)}\right.\right.\\
                  &\left.\left.
                    + \frac{1}{g_A}(g_{A,0}-g_A)\right\}\,\bigg
                    | Z
                    \right]
           \end{align*}
           Plugging in $g=g_0$ or $h=h_0$ yields the result.
         \end{proof}
         \subsubsection{Theorem \ref{theo:oracle}}
         \begin{proof}
           We start by assuming the minimization of the risk in the definition
           of $\hat\alpha$ and $\tilde\alpha$ is carried out in a grid
           $\mathcal B_n \subset \mathcal B=\{\alpha \in \R^J:\alpha_j\geq 0,
           \sum_{j=1}^{J}\alpha_j=1\}$ of polynomial size in $n$ (that is
           $|\mathcal B_n|\lesssim n^q$) for some $1 \leq q<\infty$, but do away
           with this assumption at the end of the proof. Let $\hat\beta$ ad
           $\tilde\beta$ denote the cross-validated and oracle selectors when
           the risk minimization is performed in $\mathcal B_n$ rather than
           $\mathcal B$. We use $P_{n,k}$ to denote the empirical distribution
           corresponding to the validation set $\mathcal V_k$, as well as
           $E_K(X)=K^{-1}\sum_{k=1}^KX_k$ to denote an average across
           validation splits. We denote $PL(\theta) = \int L(o;\theta)
           dP(o)$. Let
           \[\eta^\star =
             \begin{cases}
               (g_0,\hat h_k)&\text{if } g_1 = g_0\text{ and } h_1 \neq h_0\\
               (\hat g_k, h_0)&\text{if } g_1 \neq g_0\text{ and } h_1 = h_0\\
               (g_0, h_0)&\text{if } g_1 = g_0 \text{ and } h_1 = h_0
             \end{cases}\]
           Define the centered loss function
           \[L^0_\eta(O;\theta)=L_\eta(O;\theta) - L_{\eta_1}(O;\theta_0).\] In
           this proof we denote with $R$ the corresponding centered risks, i.e.,
           denote
           \begin{eqnarray*}
             \hat R_{\hat\eta}(\hat\theta)& = & \frac{1}{K}\sum_{k=1}^K\frac{1}{|{\cal V}_k|}\sum_{i\in
                                                {\cal V}_k}L^0_{\hat\eta_k}\left(O_i;
                                                \hat\theta_k\right)\\
             \tilde R_{\hat\eta}(\hat\theta)&=&\frac{1}{K}\sum_{k=1}^K\int L^0_{\hat\eta_k}(o;\hat\theta_k)dP_0(o)
           \end{eqnarray*}
           the corresponding  cross-validated and oracle risks.  For notational convenience we denote
           $R(\beta)=R(\hat\theta_\beta)$.
           Note that $\tilde R_{\eta^\star}(\beta) = \err(\hat\theta_{\beta})$. For $\delta > 0$ we have
           \begin{align}
             0\leq&\, \tilde R_{\eta^\star}(\hat\beta)\notag\\
             \leq&\,\tilde R_{\eta^\star}(\hat\beta)+(1+\delta)\{\hat
                   R_{\hat\eta}(\tilde\beta)-\hat R_{\hat\eta}(\hat\beta)\}\notag\\
             =&\,(1+2\delta)\tilde R_{\eta^\star}(\tilde\beta)\notag\\
                  &-(1+\delta)\{\hat R_{\eta^\star}(\hat\beta) - \tilde R_{\eta^\star}(\hat\beta)\}
                    -\delta \tilde R_{\eta^\star}(\hat\beta)\label{eq:T}\\
                  &+(1+\delta)\{\hat R_{\eta^\star}(\tilde\beta) - \tilde R_{\eta^\star}(\tilde\beta)\}
                    -\delta \tilde R_{\eta^\star}(\tilde\beta)\label{eq:R}\\
                  &+(1+\delta)\{\hat R_{\hat\eta}(\tilde\beta) - \hat
                    R_{\eta^\star}(\tilde\beta)\}\label{eq:A}\\
                  &-(1+\delta)\{\hat R_{\hat\eta}(\hat\beta) - \hat R_{\eta^\star}(\hat\beta)\}\notag
           \end{align}
           where the second inequality is a consequence of the definition of
           $\hat\beta$ as the minimizer of $\hat R_{\hat\eta}(\beta)$, and the
           last equality is the result of adding and subtracting some
           terms. Denote (\ref{eq:T}) with $T$, (\ref{eq:R}) with $H$, and
           (\ref{eq:A}) with $Q(\tilde\alpha)$.

           Note that the assumptions of the theorem imply that
           $P_0\{|D_{\eta^\star}(O)|\leq M\}=1$ for some constant $M$. This, together
           with Lemma~\ref{lemma:bound} below, allow the application of Lemma~3
           in \cite{vanderLaan&Dudoit03} \citep[see also pages 143-145
           of][]{Dudoit&vanderLaan05} to show that
           \[\E(T+H)\lesssim \frac{1 +\log n}{n}.\]
           It remains to analyze $Q(\hat\beta)$ and $Q(\tilde\beta)$. First, we write
           $Q_\beta = Q_1(\beta) + Q_2(\beta)$, where
           \begin{align*}
             Q_1(\beta)&=(1+\delta)E_KP_0(L^0_{\hat\eta_k} -
                         L^0_{\eta^\star})(\theta_{\beta,k})\\
             Q_2(\beta)&=(1+\delta)E_K(P_{n,k}-P_0)(L^0_{\hat\eta_k} - L^0_{\eta^\star})(\theta_{\beta,k})
           \end{align*}
           For $Q_1(\beta)$, note that $R_{\eta^\star}(\beta)=E_KP_0(\theta_0  -
           \theta_{\beta,k})^2$. This yields, for $\beta\in(\hat\beta,\tilde\beta)$,
           \begin{align*}
             \E\,Q_1(\beta)&=  (1+\delta)\E\,E_KP_0(L^0_{\hat\eta_k} - L^0_{\eta^\star})(\theta_{\beta,k})\notag\\
                           &=2(1+\delta)\E\,E_KP_0(D_{\hat\eta_k}
                             - D_{\eta^\star})(\theta_0 -\theta_{\beta,k})\notag\\
                           &=2(1+\delta)\{\E\,E_KP_0(D_{\hat\eta_k}
                             - D_{\eta_0})(\theta_0 -\theta_{\beta,k})-\E\,E_KP_0(D_{\eta^\star}
                             - D_{\eta_0})(\theta_0 -\theta_{\beta,k})\}\notag
           \end{align*}
           Conditioning on $W$ first, from the definition of $\eta^\star$,
           Lemma~\ref{lemma:eif} shows that the second term in the right hand
           side is zero.  Conditioning on $W$ first along with the proof of
           Lemma~\ref{lemma:eif} and the Cauchy-Schwartz inequality also yields
           \begin{align*}
             \E\,Q_1(\beta)&=  2(1+\delta)\E\,E_KP_0(D_{\hat\eta_k}
                             - D_{\eta_0})(\theta_0 -\theta_{\beta,k})\notag\\
                           &= \sum_{t=1}^m\E\,E_KP_0(\theta_0 -\theta_{\beta,k})\left[-\frac{S(m)}{S(t)}S_0(t-1)\{h_0(t)-h(t)\}\right.\\&\left.\left\{\frac{g_{A,0}}{g_AG(t)}\sum_{k=0}^{t-1}G_0(k)\{g_{R,0}(k)
                                                                                                                                          - g_R(k)\}\frac{G(t)}{G(k+1)}
                                                                                                                                          + \frac{1}{g_A}(g_{A,0}-g_A)\right\}\,\bigg
                                                                                                                                          |\, W \right]\\
                           &\leq 2(1+\delta)\left[\E\,E_KP_0(\theta_0
                             -\theta_{\beta,k})^2\right]^{1/2}||\E(\hat g -
                             g_0)(\hat h - h_0)||\\
                           &\lesssim \sqrt{\E\tilde
                             R_{\eta^\star}(\hat \beta)}\, B_1(\hat\eta, \eta_0)
           \end{align*}
           where the last inequality follows from Lemma~\ref{lemma:bound} in
           Appendix~\ref{sec:lemmas} and the definition of $\tilde\beta$ as the
           minimizer of $\tilde R_{\eta^\star}(\beta)$, and the second to last inequality
           follows from Cauchy-Schwartz applied to the norm defined by the inner
           product $<f_k,g_k>= \E\,E_KP_0 f_k g_k$. For $Q_2(\beta)$, note that
           $(P_{n,k}-P_0)(L_{\hat\eta_k} - L_{\eta^\star})(\theta_{\beta,k})$ is an
           empirical processes with index set $\mathcal B_n$, where the latter
           set is finite. We will apply the following inequality for empirical
           processes with finite index set:
           \begin{equation}E\max_{f\in \mathcal F}|(P_n-P_0)f|\lesssim \sqrt{\frac{\log
                 |\mathcal F|}{n}}||F||,\label{eq:eresult}
           \end{equation} where $F$ is an envelope of
           $\mathcal F$. This result is a direct consequence of Lemma 19.38 of
           \cite{vanderVaart98}. Note that the all functions in
           $\mathcal F_k = \{(L^0_{\hat\eta_k} -
           L^0_{\eta^\star})(\theta_{\beta,k}):\beta \in \mathcal B_n\}$ satisfy
           \begin{align*}
             P_0(L^0_{\hat\eta_k} - L^0_{\eta^\star})^2(\theta_{\beta,k})&=P_0\{(D_{\hat\eta_k}
                                                                           - D_{\eta^\star})^2(\theta_0
                                                                           -
                                                                           \theta_{\beta,k})^2\}\\
                                                                         &\lesssim P_0(D_{\hat\eta_k}
                                                                           - D_{\eta^\star})^2\\
                                                                         &\lesssim B_2(\hat\eta,\eta_1),
           \end{align*}
           where the second inequality follows from Lemma~\ref{lemma:bound}. Thus,
           the envelope $F_k$ of $\mathcal F_k$ is bounded by the same
           quantity. This, together with (\ref{eq:eresult}) shows
           \[\E Q_2(\beta)\lesssim \sqrt{\frac{\log n}{n}}B_2(\hat\eta,\eta_1).\]
           This proves
           \begin{equation*}0\leq \E\tilde
             R_{\eta_0}(\hat \beta)\lesssim
             (1+2\delta)\E\tilde
             R_{\eta_0}(\tilde\beta) +
             \frac{1+\log n}{n} + \sqrt{\E\tilde
               R_{\eta_0}(\hat \beta)}B_1(\hat\eta,\eta_1) + \sqrt{\frac{\log
                 n}{n}}B_2(\hat\eta,\eta_1),
           \end{equation*}
           which is equivalent to $x^2-bx\leq c$ for
           \begin{align*}
             x&=\sqrt{\E\tilde R_{\eta_0}(\hat \beta)}\\
             b&=B_1(\hat\eta,\eta_1)\\
             c&=(1+2\delta)\E\tilde
                R_{\eta_0}(\tilde\beta) +
                \frac{1+\log n}{n} + \sqrt{\frac{\log
                n}{n}}B_2(\hat\eta,\eta_1).
           \end{align*}
           The quadratic formula $x\leq (b+\sqrt{b^2+4c})/2$ implies $x\leq b
           +\sqrt{c}$, which yields
           \begin{equation}0\leq \sqrt{\E\tilde
               R_{\eta_0}(\hat \beta)}\lesssim
             \sqrt{(1+2\delta)\E\tilde
               R_{\eta_0}(\tilde\beta)} + \sqrt{\frac{1+\log
                 n}{n}} + B_1(\hat\eta,\eta_0) +
             \left[\frac{\log n}{n}\right]^{1/4}\sqrt{B_2(\hat\eta,\eta_0)}
             \label{eq:orbeta}
           \end{equation}
           From our definitions and
           assumptions, the function $f(\beta)=R_{\eta_0}(\hat\theta_\beta)$
           satisfies the Lipschitz condition
           \[||f(\beta) - f(\alpha)||_{\infty}\lesssim ||\beta - \alpha||_2,\]
           where $||\cdot||_{\infty}$ denotes the supremum norm and $||\cdot||_2$ the
           Euclidean norm. Thus $f(\hat\beta)-f(\hat\alpha)$ and
           $f(\tilde\beta)-f(\tilde\alpha)$ are both bounded by $n^{-q}$, which
           allows us to replace $(\hat\beta,\tilde\beta)$ by $(\hat\alpha,\tilde\alpha)$ in (\ref{eq:orbeta}),
           completing the proof of the theorem.
         \end{proof}
         \subsubsection{Theorem \ref{theo:oraclezo}}
         \begin{proof}

           For convenience in the calculations we use the loss function
           \[L_\eta(o;f)=-D_\eta(o)d_f(z) = -\one\{D_\eta(o) > 0\}D_\eta(o) +
             |D_\eta(o)|\one[\one\{D_\eta(o) > 0\}\neq d_f],\]
           which is equivalent to the one used in the Theorem.
           Let $\eta^\star$, $L_\eta^0(O;f)$, $\hat R_\eta(\beta)$, and $\tilde R_\eta(\beta)$
           be defined as in the proof of Theorem~\ref{theo:oracle}. We have
           \begin{eqnarray*}
             0&\leq& \tilde R_{\eta^\star}(\hat \beta)\\
              &=&\tilde R_{\eta^\star}(\tilde \beta)\\
              &&+\{\tilde R_{\hat\eta}(\tilde\beta) - \tilde
                 R_{\eta^\star}(\tilde\beta)\}\\
              &&+\{\hat R_{\hat\eta}(\hat\beta) - \tilde
                 R_{\hat\eta}(\tilde\beta)\}\\
              &&+\{\tilde R_{\hat\eta}(\hat\beta) - \hat
                 R_{\hat\eta}(\hat\beta)\}\\
              &&-\{\tilde R_{\hat\eta}(\hat\beta) - \tilde
                 R_{\eta^\star}(\hat\beta)\}.
           \end{eqnarray*}
           Define
           \begin{eqnarray*}
             T(\beta)&=&(\hat R_{\hat\eta} - \tilde R_{\hat\eta})(\beta)\\
             Q(\beta)&=&(\tilde R_{\hat\eta} - \tilde R_{\eta^\star})(\beta).
           \end{eqnarray*}
           Since, by definition, $\hat R_{\hat\eta}(\hat\beta)\leq \hat
           R_{\hat\eta}(\tilde\beta)$, we have
           \[  0\leq \tilde R_{\eta^\star}(\hat \beta) + T(\tilde\beta) -
             T(\hat\beta) +Q(\tilde\beta) - Q(\hat\beta).\]
           \cite{vanderLaan&Dudoit03}, page 26, show that
           \[\E T(\tilde\beta) - \E T(\hat\beta)\lesssim (\log n/n)^{1/2}.\]
           In the proof of Theorem~\ref{theo:oraclezo}, we show that
           \[\E Q(\tilde\beta) - \E Q(\hat\beta)\lesssim B_1(\hat\eta,\eta_0),\]
           completing the proof of first claim of the theorem.

           Assume now condition \ref{ass:ma} holds with $\alpha=\infty$ such that
           $\inf_{z\in \mathbf Z}|\theta_0(z)|>0$ The proof in this case has the
           same steps as the proof of Theorem~\ref{theo:oracle} and we will only
           provide a sketch. The conditions of the Theorem allow application of
           Lemma~\ref{lemma:boundzo} below to obtain
           \[\E(T+H)\lesssim \frac{1 +\log n}{n}.\]
           For $Q_1(\alpha)$ and $Q_2(\alpha)$ we get
           \begin{align*}
             \E\,Q_1(\alpha)&=  (1+\delta)\E E_KP_0(L^0_{\hat\eta_k} -
                              L^0_{\eta^\star})(f_{\alpha,k})\\
                            &=2(1+\delta)\E\,E_KP_0(D_{\hat\eta_k}
                              - D_{\eta_0})(d_{f_{\alpha,k}} - d_0)\notag\\
                            &= \sum_{t=1}^m \E\,E_KP_0(d_{f_{\alpha,k}} - d_0)\left[-\frac{S(m)}{S(t)}S_0(t-1)\{h_0(t)-h(t)\}\right.\\&\left.\left\{\frac{g_{A,0}}{g_AG(t)}\sum_{k=0}^{t-1}G_0(k)\{g_{R,0}(k)
                                                                                                                                        - g_R(k)\}\frac{G(t)}{G(k+1)}
                                                                                                                                        + \frac{1}{g_A}(g_{A,0}-g_A)\right\}\,\bigg
                                                                                                                                        |\, W \right]\\
                            &\leq 2(1+\delta)\left[\E\,E_KP_0(\theta_0
                              -\theta_{\beta,k})^2\right]^{1/2}\E||(\hat g -
                              g_0)(\hat h - h_0)||\\
                            &\lesssim  B_1(\hat\eta, \eta_0).
           \end{align*}
           For $Q_2(\alpha)$, note that $(P_{n,k}-P_0)(L_{\hat\eta_k} -
           L_{\eta^\star})(f_{\alpha,k})$ is an empirical processes with
           index set $\mathcal A_n$, where the latter set is the finite set with
           $Mn^q$ points in which $\hat\alpha$ is computed. We will apply
           inequality (\ref{eq:eresult}). Note that the all functions in
           $\mathcal F_k = \{(L^0_{\hat\eta_k} -
           L^0_{\eta^\star})(f_{\alpha,k}):\alpha \in \mathcal A_n\}$ satisfy
           \begin{align*}
             P_0(L^0_{\hat\eta_k} - L^0_{\eta^\star})^2(\theta_{\alpha,k})&=P_0\{(D_{\hat\eta_k}
                                                                            - D_{\eta^\star})^2(d_{f_{\alpha,k}} - d_0)^2\}\\
                                                                          &\lesssim P_0(D_{\hat\eta_k}
                                                                            - D_{\eta^\star})^2\\
                                                                          &\lesssim B_2^2(\hat\eta,\eta_1),
           \end{align*}
           where the second inequality follows from Lemma~\ref{lemma:bound}. Thus,
           the envelope $F_k$ of $\mathcal F_k$ is bounded by the same
           quantity. This, together with (\ref{eq:eresult}) shows
           \[E Q_2(\alpha)\lesssim \sqrt{\frac{\log n}{n}}B_2(\hat\eta,\eta_1).\]
           This completes the proof.
         \end{proof}
         \subsubsection{Lemma~\ref{lemma:suloss}}
         \begin{proof}
           This is a direct application of Theorems 1 (part 3) and 2 of
           \cite{bartlett2006convexity}. See also Theorem 5 of
           \cite{luedtke2016super}.
         \end{proof}

         \subsubsection{Lemmas}\label{sec:lemmas}
         \begin{lemma}\label{lemma:bound}
           Consider the assumptions of Theorem~\ref{theo:oracle}. Let
           $Z=L_{\eta_0}(O;\theta) - L_{\eta_0}(O;\theta_0)$. We have
           \[\var_0(Z)\lesssim
             E_0(Z)\]
         \end{lemma}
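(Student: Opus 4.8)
The plan is to exploit the quadratic structure of the loss together with the exact unbiasedness of $D_{\eta_0}$ guaranteed by Lemma~\ref{lemma:eif}. Throughout I would reserve $Z$ for the covariate transformation and write $\xi$ for the random variable called $Z$ in the statement, so that $\xi = \{D_{\eta_0}(O)-\theta(Z)\}^2 - \{D_{\eta_0}(O)-\theta_0(Z)\}^2$. Put $B=B(Z)=\theta(Z)-\theta_0(Z)$, a function of the covariates alone, and $U=U(O)=D_{\eta_0}(O)-\theta_0(Z)$. Factoring the difference of squares as $(\theta_0-\theta)(2D_{\eta_0}-\theta-\theta_0)$ and rewriting $2D_{\eta_0}-\theta-\theta_0 = 2U-B$ yields the identity $\xi = B^2 - 2BU$, which is the form I would work with.

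First I would record the consequence of unbiasedness. Since $\eta_0=(h_0,g_{A,0},g_{R,0})$ trivially satisfies the hypothesis of Lemma~\ref{lemma:eif} (it has $h=h_0$ \emph{and} $g=g_0$), we have $E_0\{D_{\eta_0}(O)\mid Z\}=\theta_0(Z)$, hence $E_0(U\mid Z)=0$. Because $B$ is $Z$-measurable, the tower property annihilates every term linear in $U$: in particular $E_0(BU)=E_0\{B\,E_0(U\mid Z)\}=0$ and, crucially, $E_0(B^3U)=E_0\{B^3\,E_0(U\mid Z)\}=0$. The first identity already gives $E_0(\xi)=E_0(B^2)=||\theta-\theta_0||^2$, the excess risk.

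Next I would bound the variance by the second moment, $\var_0(\xi)\le E_0(\xi^2)$, and expand $\xi^2 = B^4 - 4B^3U + 4B^2U^2$. The middle term vanishes in expectation by the observation above, leaving $E_0(\xi^2)=E_0(B^4)+4E_0(B^2U^2)$. It is exactly this cancellation of $E_0(B^3U)$ — rather than a crude Cauchy--Schwarz bound, which would produce $||B||$ in place of $||B||^2$ — that makes the variance scale correctly with the risk. To finish I would invoke uniform boundedness: strong positivity (Assumption~\ref{ass:4}) keeps the weights $1/\{g_{A,0}(a,W)G(m,a,W)\}$ in $D_{\eta_0}$ finite, while the survival ratios are at most one and $|L_m-h_0|\le 1$, so $P_0\{|D_{\eta_0}(O)|\le M\}=1$ for some constant $M$; combined with the boundedness of $\theta_0$ and of the (convex combinations of) candidate blip functions, this gives $||B||_\infty\le c_1$ and $||U||_\infty\le c_2$. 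Then $E_0(B^4)\le c_1^2 E_0(B^2)$ and $E_0(B^2U^2)\le c_2^2 E_0(B^2)$, so $\var_0(\xi)\le (c_1^2+4c_2^2)E_0(B^2)=(c_1^2+4c_2^2)E_0(\xi)$, which is the claim.

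The main obstacle, and the only place any genuine work is needed, is this uniform boundedness of $D_{\eta_0}$ and of the candidate functions, since everything else is exact algebra plus the conditional-mean-zero property. The boundedness of $D_{\eta_0}$ is precisely where Assumption~\ref{ass:4} enters, guaranteeing the inverse-probability and inverse-survival weights stay bounded; the boundedness of the blip candidates is a mild regularity condition, implicit in the super-learner construction and enforceable by truncation if necessary.
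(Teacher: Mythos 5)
Your proof is correct and takes essentially the same route as the paper: factor the difference of squares, invoke Lemma~\ref{lemma:eif} with $\eta=\eta_0$ to get $E_0(\xi)=E_0\{\theta(Z)-\theta_0(Z)\}^2$, then bound $\var_0(\xi)\leq E_0(\xi^2)$ using the uniform boundedness of $D_{\eta_0}$ (from strong positivity) and of the candidate blip functions. The only difference is cosmetic, and one remark in your write-up is overstated: the paper never expands $\xi^2$, but instead bounds the factored form $E_0[\{\theta_0(Z)-\theta(Z)\}^2\{2D_{\eta_0}(O)-\theta(Z)-\theta_0(Z)\}^2]$ by taking the sup norm of the second factor, so the cancellation $E_0(B^3U)=0$ that you call crucial is in fact unnecessary --- the correct scaling comes from factoring out $B^2$ and applying a sup-norm (H\"older) bound, not from killing the cross term, and indeed even in your expansion the middle term satisfies $|E_0(B^3U)|\leq c_1c_2\,E_0(B^2)$ without any conditioning argument.
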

         \begin{proof}
           First, note that
           \[Z = \{\theta_0(Z) -
             \theta(Z)\}\{2D_{\eta_0}(O)-\theta(Z)-\theta_0(Z)\}.\]
           In light of Lemma~\ref{lemma:eif} we have
           \[E_0(Z)=E_0\{\theta_0(Z) - \theta(Z)\}^2.\]
           Note that $P_0\{|2D_{\eta_0}-\theta(Z)-\theta_0(Z)|\leq 4\max(M,C_1)\}=1$. Thus
           \begin{align*}
             \var_0(Z) &\leq  E_0(Z^2)\\
                       &= E\{\theta_0(Z) -
                         \theta(Z)\}^2\{2D_{\eta_0}(O)-\theta(Z)-\theta_0(Z)\}^2\\
                       &\leq 16\max(M^2,C_1^2)E_0(Z),
           \end{align*}
           which completes the proof of the lemma.
         \end{proof}
         \begin{lemma}\label{lemma:boundzo}
           Consider the assumptions of Theorem~\ref{theo:oraclezo}. Let
           \[L_{\phi,\eta}(o,f)=D_\eta(o)d_f(z),\] Let $Z=L_{\eta_1}(O;\theta)
           - L_{\eta_1}(O;\theta_0)$. We have
           \[\var_0(Z)\lesssim
             E_0(Z)\]
         \end{lemma}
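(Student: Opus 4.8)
The plan is to prove this as the classification analogue of Lemma~\ref{lemma:bound}: the target is again a variance-to-mean inequality $\var_0(Z)\lesssim E_0(Z)$, but for the weighted $0$--$1$ loss rather than the quadratic loss, and the decisive new ingredient is the strong margin assumption \ref{ass:ma} with $\lambda=\infty$. First I would adopt the sign convention $L_{\eta_1}(o;f)=-D_{\eta_1}(o)d_f(z)$ used in the proof of Theorem~\ref{theo:oraclezo}, under which $f_0$ is a minimizer; since $f_0$ satisfies $\sign\{f_0(z)\theta_0(z)\}=1$ and, under $\lambda=\infty$, $\theta_0(z)\neq0$ everywhere, we have $d_{f_0}=d_0$. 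The centered loss then takes the clean product form
\[
Z = L_{\eta_1}(O;f)-L_{\eta_1}(O;f_0) = D_{\eta_1}(O)\{d_0(Z)-d_f(Z)\}.
\]
Because $d_0(Z)-d_f(Z)\in\{-1,0,1\}$, its square equals the mismatch indicator $\one\{d_0(Z)\neq d_f(Z)\}$, so $Z^2$ is insensitive to the sign convention.

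Next I would compute $E_0(Z)$ by conditioning on the covariate and invoking Lemma~\ref{lemma:eif}, which gives $E_0\{D_{\eta_1}(O)\mid Z=z\}=\theta_0(z)$. Since $d_0(z)=\one\{\theta_0(z)>0\}$, a short case analysis on the sign of $\theta_0(z)$ yields
\[
E_0(Z) = E_0\big[\theta_0(Z)\{d_0(Z)-d_f(Z)\}\big] = E_0\big[|\theta_0(Z)|\,\one\{d_0(Z)\neq d_f(Z)\}\big],
\]
which in particular re-confirms $E_0(Z)\geq0$. On the second-moment side, using $\{d_0-d_f\}^2=\one\{d_0\neq d_f\}$ gives
\[
E_0(Z^2) = E_0\big[D_{\eta_1}(O)^2\,\one\{d_0(Z)\neq d_f(Z)\}\big].
\]

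To close the bound I would combine two facts. First, the positivity assumption \ref{ass:4} makes the transformation bounded, $P_0\{|D_{\eta_1}(O)|\leq M\}=1$ for some $M$ (as already observed for $D_{\eta^\star}$ in the proof of Theorem~\ref{theo:oracle}), so $E_0(Z^2)\leq M^2\,E_0[\one\{d_0\neq d_f\}]$. Second, assumption \ref{ass:ma} with $\lambda=\infty$ forces $\inf_z|\theta_0(z)|\geq c>0$, hence $E_0(Z)\geq c\,E_0[\one\{d_0\neq d_f\}]$, i.e.\ $E_0[\one\{d_0\neq d_f\}]\leq E_0(Z)/c$. Chaining these with $\var_0(Z)\leq E_0(Z^2)$ gives
\[
\var_0(Z)\leq E_0(Z^2)\leq \frac{M^2}{c}\,E_0(Z)\lesssim E_0(Z),
\]
which is the claim.

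The main obstacle—and the reason the companion fast-rate result is stated only for $\lambda=\infty$—is the step lower-bounding $E_0(Z)$ by the misclassification probability. Note that $E_0(Z)$ weights the mismatch event by $|\theta_0(Z)|$ whereas $E_0(Z^2)$ weights it by the bounded $D_{\eta_1}^2$; the variance-to-mean inequality can therefore hold only when $|\theta_0|$ is bounded away from zero, since otherwise the ratio $E_0(Z^2)/E_0(Z)$ blows up as mismatch mass concentrates near the decision boundary $\theta_0=0$. This is exactly what obstructs a fast-rate bound for general $\lambda$ in Theorem~\ref{theo:oraclezo}, and parallels the role played by the margin condition in the classification literature cited after assumption \ref{ass:ma}.
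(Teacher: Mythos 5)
Your proposal is correct and follows essentially the same route as the paper's proof: bound $E_0(Z^2)=E_0[D_{\eta_1}^2\one\{d_0\neq d_f\}]$ by a constant times the misclassification probability (boundedness of $D_{\eta_1}$ under assumption \ref{ass:4}), then use the $\lambda=\infty$ margin condition to insert $|\theta_0(Z)|/\inf_z|\theta_0(Z)|$ and recognize $E_0[|\theta_0(Z)|\one\{d_0\neq d_f\}]=E_0(Z)$. Your write-up is in fact slightly more careful than the paper's, since you make explicit (via Lemma~\ref{lemma:eif} and the sign convention) the final identity that the paper asserts without comment.
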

         \begin{proof}
           We have
           \begin{align*}
             E_0[Z^2]&=E_0|d_{0}(Z) - d_f(Z)|^2D^2_{\eta_1}\\
                     &\leq C E_0\one\{d_0(Z)\neq d_f(Z)\}\\
                     &\leq C
                       E_0\frac{|\theta_0(Z)|}{\inf_z|\theta_0(Z)|}\one\{d_0(Z)\neq d_f(Z)\}\\
                     &\leq C_2 E_0|\theta_0(Z)|\one\{d_0(Z)\neq d_f(Z)\}\\
                     &=C_2 E_0(Z).
           \end{align*}
         \end{proof}

         \begin{lemma}\label{lemma:bound}
           For each $\hat\eta=(\hat g,\hat h)\to \eta_1=(g_1,h_1)$ such that either
           $g_1=g_0$ or $h_1=h_0$ define
           \[\eta^\star =
             \begin{cases}
               (g_0,\hat h)&\text{if } g_1 = g_0\text{ and } h_1 \neq h_0\\
               (\hat g, h_0)&\text{if } g_1 \neq g_0\text{ and } h_1 = h_0\\
               (g_0, h_0)&\text{if } g_1 = g_0 \text{ and } h_1 = h_0.
             \end{cases}\]
           We have
           \[P_0(D_{\hat \eta} - D_{\eta^\star})^2 \lesssim B^2(\hat \eta,\eta_1),\]
           with $B^2$ defined in Theorem~\ref{theo:oracle}.
         \end{lemma}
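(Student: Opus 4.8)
The plan is to exploit the additive structure of $D_\eta$ derived in the proof of Lemma~\ref{lemma:eif}, namely $D_\eta = \sum_{m=1}^{\tau-1}(D_{m,1,\eta} - D_{m,0,\eta})$, where each treatment-time-specific term
\[D_{m,a,\eta}(O) = -\sum_{t=1}^m \frac{\one\{A=a\}I_t}{g_A(a,W)G(t,a,W)}\frac{S(m,a,W)}{S(t,a,W)}\{L_t - h(t,a,W)\} + S(m,a,W)\]
splits into an inverse-probability-weighting part and a plug-in part $S(m,a,W)$, and where $S$ is a function of $h$ and $G$ a function of $g_R$ through the product formulas in (\ref{defS}). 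Since there are only finitely many $(m,a)$, the triangle inequality in $L_2(P_0)$ reduces the task to bounding $\|D_{m,a,\hat\eta} - D_{m,a,\eta^\star}\|$ term by term. Throughout, $B^2(\hat\eta,\eta_1)$ is read as the random integrand $\{\one(g_1=g_0)\|\hat g-g_0\|+\one(h_1=h_0)\|\hat h-h_0\|\}^2$ whose expectation is $B_2$, so that the claimed inequality is understood conditionally on the training sample.

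I would then dispatch the three branches of $\eta^\star$ in turn, observing that each branch perturbs exactly the component(s) flagged by the indicators. In the branch $g_1=g_0$, where $\eta^\star=(g_0,\hat h)$, the survival functions $S=\hat S$ agree in both $D_{m,a,\hat\eta}$ and $D_{m,a,\eta^\star}$ and the plug-in term cancels, so the entire difference is carried by $\tfrac{1}{g_{A,0}G_0(t)}-\tfrac{1}{\hat g_A\hat G(t)}$ inside the weighting part. In the branch $h_1=h_0$, where $\eta^\star=(\hat g,h_0)$, the inverse-probability factor $1/(\hat g_A\hat G)$ is frozen, but $h$ enters through the residual $L_t-h(t)$, the ratio $S(m)/S(t)$, and the plug-in term $S(m,a,W)$, so all three dependencies must be expanded. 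The remaining branch $g_1=g_0,\,h_1=h_0$ is treated by inserting the intermediate nuisance value $(\hat g,h_0)$ and applying the triangle inequality to reduce it to the two preceding cases.

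The uniform control of the ``nuisance-free'' factors rests on strong positivity (Assumption~\ref{ass:4}), which bounds $g_A$ away from zero and $g_R(t,a,W)$ away from one on $t\le\tau-1$, so that $g_A G(t,a,W)$ is bounded away from zero and $1/(g_A G)$ is bounded above, together with $I_t,\one\{A=a\}\in\{0,1\}$ and $L_t,h\in[0,1]$. A convenient simplification is that for $t\le m$ the survival ratio equals the product $S(m,a,W)/S(t,a,W)=\prod_{k=t+1}^m\{1-h(k,a,W)\}\in[0,1]$, so no denominator in $S$ ever needs to be controlled. The single algebraic device behind every estimate is then the telescoping identity $\prod_i a_i-\prod_i b_i=\sum_j(\prod_{i<j}a_i)(a_j-b_j)(\prod_{i>j}b_i)$, which rewrites a difference of the product-form quantities $G$, $S$, or $S(m)/S(t)$ as a sum of single-factor differences $\hat g_R(k)-g_{R,0}(k)$ or $\hat h(k)-h_0(k)$, each multiplied by a product of factors lying in $[0,1]$. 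Summing these estimates gives $\|D_{m,a,\hat\eta}-D_{m,a,\eta^\star}\|\lesssim\one(g_1=g_0)\|\hat g-g_0\|+\one(h_1=h_0)\|\hat h-h_0\|$, and squaring the aggregated bound produces the claimed $B^2(\hat\eta,\eta_1)$.

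The hard part will be the branch $h_1=h_0$, where $h$ appears simultaneously in the residual $L_t-h(t)$, in the survival ratio $S(m)/S(t)$, and in the plug-in term $S(m,a,W)$; keeping these three dependencies organized and applying the telescoping identity to each is where the bookkeeping is heaviest. By comparison the $g_1=g_0$ branch is clean, since the plug-in term cancels and only the inverse-probability factor is perturbed, reducing everything to a single application of the identity to $g_A G$.
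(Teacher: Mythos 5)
Your proposal is correct and takes essentially the same route as the paper's proof: a case analysis over the three branches of $\eta^\star$, bounding the $L_2(P_0)$ difference in each branch by the nuisance error singled out by the corresponding indicator, and combining. The paper compresses everything into the phrase ``straightforward algebra shows''; your decomposition into the treatment-time-specific terms $D_{m,a,\eta}$, the use of strong positivity to control the inverse-probability factors, and the telescoping identity of Lemma~\ref{lemma:telescope} (with the intermediate point $(\hat g,h_0)$ handling the doubly-consistent case) are precisely the ingredients that phrase is hiding, so you have simply made explicit what the paper leaves implicit.
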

         \begin{proof}
           First let $g_1=g_0$ and $h_1 \neq h_0$. Then $\eta^\star=(g_0,\hat
           h)$ and straightforward algebra shows
           \[P_0(D_{\hat\eta} - D_{\eta^\star})^2 \lesssim ||\hat g- g_1||^2\]
           Analogously, for $g_1\neq g_0$ and $h_1 = h_0$ we have
           \[P_0(D_{\hat\eta} - D_{\eta^\star})^2 \lesssim ||\hat h- h_1||^2.\]
           Now, for $g_1= g_0$ and $h_1 = h_0$ we get
           \[P_0(D_{\hat\eta} - D_{\eta^\star})^2 \lesssim \{||\hat h- h_1|| +
             ||\hat g- g_1||\}^2.\]
           Putting these results together proves the lemma.
         \end{proof}
         \begin{lemma}\label{lemma:telescope}
           For two sequences $a_1,\ldots,a_m$ and $b_1,\ldots,b_m$ we have
           \[\prod_{t=1}^{m}(1-a_t) - \prod_{t=1}^{m}(1-b_t) = \sum_{t=1}^{m}\left\{\prod_{k=1}^{t-1}(1-a_k)(b_t-a_t)\prod_{k=t+1}^{m}(1-b_k)\right\}.\]
         \end{lemma}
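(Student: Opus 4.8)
The plan is to prove this purely algebraic identity by a telescoping argument built around a family of \emph{hybrid} products that interpolate between $\prod_{t=1}^m(1-b_t)$ and $\prod_{t=1}^m(1-a_t)$. Concretely, for each $t\in\{0,1,\ldots,m\}$ I would define
\[
P_t = \prod_{k=1}^{t}(1-a_k)\prod_{k=t+1}^{m}(1-b_k),
\]
with the usual convention that an empty product equals $1$. Then the two endpoints recover exactly the quantities of interest: $P_0 = \prod_{k=1}^m(1-b_k)$ (the first product is empty) and $P_m = \prod_{k=1}^m(1-a_k)$ (the second product is empty). Hence the left-hand side of the claimed identity is precisely $P_m - P_0$.

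Next I would write this difference as a telescoping sum, $P_m - P_0 = \sum_{t=1}^m (P_t - P_{t-1})$, and evaluate a single generic increment. Observe that $P_t$ and $P_{t-1}$ share the common factor $\prod_{k=1}^{t-1}(1-a_k)\prod_{k=t+1}^{m}(1-b_k)$, differing only in whether the $t$-th slot carries $(1-a_t)$ or $(1-b_t)$. Factoring out the common part gives
\[
P_t - P_{t-1} = \prod_{k=1}^{t-1}(1-a_k)\prod_{k=t+1}^{m}(1-b_k)\,\bigl[(1-a_t)-(1-b_t)\bigr] = \prod_{k=1}^{t-1}(1-a_k)\,(b_t-a_t)\prod_{k=t+1}^{m}(1-b_k),
\]
which is exactly the summand appearing on the right-hand side of the lemma. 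Summing over $t$ from $1$ to $m$ then completes the argument.

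There is no genuine analytic obstacle here; the entire content is the choice of the interpolating sequence $P_t$ and careful bookkeeping of the index ranges. The only place to be attentive is the empty-product convention at the two ends (the factor $\prod_{k=1}^{0}$ when $t=1$ and the factor $\prod_{k=m+1}^{m}$ when $t=m$, each understood as $1$), since these are what make the endpoints $P_0$ and $P_m$ collapse to the single products in the statement. I would state that convention explicitly at the outset so that the telescoping collapse and the form of each increment are unambiguous.
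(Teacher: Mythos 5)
Your proof is correct and is essentially the same telescoping argument as the paper's: the paper substitutes $(b_t-a_t)=(1-a_t)-(1-b_t)$ into the right-hand side and collapses the resulting telescoping sum, while you run the identical telescope in the opposite direction by introducing the hybrid products $P_t$ explicitly, whose increments $P_t-P_{t-1}$ are exactly the paper's expanded summands. Your version is somewhat more explicit about the interpolating sequence and the empty-product conventions, but the mathematical content is the same.
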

         \begin{proof}
           Replace $(b_t-a_t)$ by $(1 - a_t) - (1 - b_t)$ in the right hand
           side and expand the sum to notice it is a telescoping sum.
         \end{proof}

\end{document}